\newcounter{one}
\newcommand{\tr}[0]{ {\rm tr}}
\newcommand{\half}[1]{{ \rm h}}
\newcommand{\Oorderof}{\mathcal{O}}
\newcommand{\orderof}[1]{\Oorderof(#1)}
\newcommand{\for}[0]{\quad \textrm{for} \quad}
\newcommand{\dist}{d}
\newcommand{\co}{{\rm c}}
\newcommand{\cl}{{\rm cl}}
\newcommand{\Cor}{{\rm Cor}}
\newcommand{\Supp}{{\rm Supp}}
\def\beq{\begin{equation}}
\def\eeq{\end{equation}}
\def\nbeq{\begin{equation*}}
\def\neeq{\end{equation*}}
\def\<{\langle}
\def\>{\rangle}
\def\tr{{\rm tr}}
\newtheorem{theorem}{Theorem}
\newtheorem{lemma}{Lemma}
\newcommand{\br}[1]{\left( #1 \right)}
 \newcommand{\norm}[1]{\left \|  #1 \right \|}
\newcommand{\abs}[1]{\left | #1 \right |}
\def\multiset#1#2{\ensuremath{\left(\kern-.3em\left(\genfrac{}{}{0pt}{}{#1}{#2}\right)\kern-.3em\right)}}
\renewcommand\thefootnote{*\arabic{footnote}}
\begin{document}


\title{Thermal Area Law in Long-Range Interacting Systems}

\author{Donghoon Kim}
\email{donghoon.kim@riken.jp}
\affiliation{
	Analytical Quantum Complexity RIKEN Hakubi Research Team, RIKEN Center for Quantum Computing (RQC), Wako, Saitama 351-0198, Japan
}

\author{Tomotaka Kuwahara}
\email{tomotaka.kuwahara@riken.jp}
\affiliation{
	Analytical Quantum Complexity RIKEN Hakubi Research Team, RIKEN Center for Quantum Computing (RQC), Wako, Saitama 351-0198, Japan
}
\affiliation{
	RIKEN Cluster for Pioneering Research (CPR), Wako, Saitama 351-0198, Japan 
}
\affiliation{
	PRESTO, Japan Science and Technology (JST), Kawaguchi, Saitama 332-0012, Japan}

\author{Keiji Saito}
\email{keiji.saitoh@scphys.kyoto-u.ac.jp}
\affiliation{Department of Physics, Kyoto University, Kyoto 606-8502, Japan}

\begin{abstract}
The area law of the bipartite information measure characterizes one of the most fundamental aspects of quantum many-body physics. 
In thermal equilibrium, the area law for the mutual information universally holds at arbitrary temperatures as long as the systems have short-range interactions.
In systems with power-law decaying interactions, $r^{-\alpha}$ ($r$: distance), conditions for the thermal area law are elusive.
In this work, we aim to clarify the optimal condition $\alpha> \alpha_c$ such that the thermal area law universally holds. 
A standard approach to considering the conditions is to focus on the magnitude of the boundary interaction between two subsystems.
However, we find here that the thermal area law is more robust than this conventional argument suggests.  
We show the optimal threshold for the thermal area law by $\alpha_c= (D+1)/2$ ($D$: the spatial dimension of the lattice), assuming a power-law decay of the clustering for the bipartite correlations. Remarkably, this condition encompasses even the thermodynamically unstable regimes $\alpha < D$.
We verify this condition numerically, finding that it is qualitatively accurate for both integrable and nonintegrable systems. 
Unconditional proof of the thermal area law is possible by developing the power-law clustering theorem for $\alpha > D$ above a threshold temperature.
Furthermore, the numerical calculation for the logarithmic negativity shows that the same criterion $\alpha > (D+1)/2$ applies to the thermal area law for quantum entanglement.
\end{abstract}

\maketitle

\textit{Introduction.---}
Quantum correlation and entanglement play pivotal roles in understanding quantum many-body systems from an information-theoretic standpoint. 
They help in identifying the exotic quantum phases~\cite{PhysRevLett.96.110404,PhysRevLett.96.110405,PhysRevB.82.155138} and serve as crucial resources in quantum information processing~\cite{RevModPhys.81.865}. One hallmark of quantum many-body systems is the area law for the ground state~\cite{RevModPhys.82.277}, which posits that the correlation and entanglement between two subsystems are constrained by the surface area of their interface. This principle has undergone extensive verification through both analytical and numerical approaches in a wide array of scenarios~\cite{PhysRevLett.90.227902,PhysRevA.66.042327,PhysRevLett.94.060503,PhysRevLett.96.100503,Hastings_2007,PhysRevA.80.052104,PhysRevB.85.195145,arad2013area,brandao2013area,kuwahara2019area,10.1145/3519935.3519962}. The area law not only elucidates the inherent complexities of quantum systems but also significantly enhances their simulatability using classical computers~\cite{landau2015polynomial,Arad2017}.

The quantum Gibbs state describing thermal equilibrium obeys an analogous principle, known as the thermal area law, as long as the systems have short-range interactions~\cite{PhysRevLett.100.070502, PhysRevE.93.022128, PhysRevB.98.235154, PhysRevX.11.011047,Lemm2023thermalarealaw}. 
The thermal area law for the mutual information is written as
\begin{align}
	\mathcal{I}(A:B) = \orderof{\beta |\partial A|} , \label{wolf}
\end{align}
where $\mathcal{I}$ represents the mutual information of the Gibbs state between two subsystems $A$ and $B$ at inverse temperature $\beta$.
Here, $|\partial A|$ denotes the size of the boundary between $A$ and $B$.
The thermal area law is not only relevant to the mutual information but also extends to a wider range of bipartite information measures~\cite{PhysRevE.93.022128,PhysRevB.98.235154,PhysRevX.11.011047}. We note that a recent study has further clarified that genuine quantum correlations for the Gibbs state also cannot be as large as the system size for short-range interacting systems~\cite{kuwahara2022exponential}.

\begin{figure}[tt]
	\centering
	\includegraphics[width=0.45\textwidth]{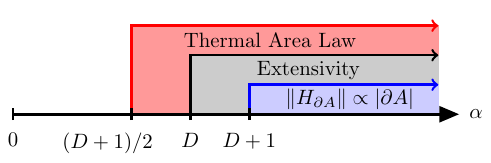}
	\caption{Schematic of the results. The thermal area law is satisfied for $\alpha > \alpha_c=(D+1)/2$ (indicated by a red region). 
		The thermal area law is more robust than expected, with the argument involving the boundary interaction $H_{\partial A}$ (blue), which represents the interaction between $A$ and $B$. The operator norm $\|H_{\partial A}\|$ scales with $|\partial A|$ for $\alpha > D+1$.
		We emphasize that the key physics behind finding the threshold $\alpha_c$ is the power-law clustering of bipartite correlations.
	}
	\label{fig:outline}
\end{figure}

In contrast to short-range interactions, the general aspects of locality in long-range interacting systems remain less understood. Long-range interactions here refer to interactions that decay as $r^{-\alpha}$ with distance $r$ \cite{RevModPhys.95.035002}, which are ubiquitous in nature such as in atomic, molecular, and optical systems \cite{yan2013observation,jurcevic2014quasiparticle,richerme2014non,Islam583,bernien2017probing,zhang2017observation}.
These interactions lead to nontrivial phenomena absent in short-range systems \cite{PhysRevLett.109.267203,PhysRevLett.113.156402,PhysRevB.93.205115,PhysRevLett.123.210602,PhysRevX.10.031010,tran2021optimal,tran2021liebrobinson,PhysRevLett.128.010603,tamaki2020energy}.
Key unresolved issues include (i) identifying the critical threshold $\alpha_{\rm c}$ above which 
any systems have the thermal area law, (ii) understanding the underlying physics, and (iii) distinguishing between classical and quantum correlations.
This paper addresses these questions as fundamental physics.
In a practical aspect, this direction may lead to finding an efficiency-guaranteed algorithm for simulating the long-range quantum Gibbs state.
The mutual information is bounded by the interaction Hamiltonian as shown in Ref.~\cite{PhysRevLett.100.070502}. A simple argument on the operator norm \cite{kuwahara2019area,PhysRevLett.128.010603} leads to that the thermal area law~\eqref{wolf} still remains valid for the condition $\alpha>D+1$. At first glance, this may lead to an intuition that the optimal condition for the thermal area law is given by $\alpha>D+1$ (or $\alpha_c=D+1$).
Indeed, the threshold given by the finite boundary interaction recurs across various contexts in physics, indicating that physical properties undergo qualitative changes at this threshold in various many-body systems ~\cite{Dyson1969,PhysRevLett.37.1577,Araki1975,Kishimoto1976,PhysRevLett.119.050501,PhysRevLett.128.010603}.  

However, in this work, we discover that the underlying physics of the thermal area law is influenced by a different characteristic: {\it the clustering property}, which is expected to remain robust in any noncritical phases. 
Note that the clustering property has been widely adopted as a basic and natural assumption in discussing various quantum properties such as entanglement area law~\cite{brandao2013area}, quantum Gibbs sampling~\cite{Kastoryano2016,Brandao2019}, the eigenstate thermalization~\cite{brandao2015equivalence,PhysRevLett.124.200604}, etc.
Consequently, we show that the thermal area law for the mutual information possesses a greater resilience than traditionally anticipated. Utilizing the power-law clustering properties in bipartite correlations, we analytically find that the sufficient condition for the thermal area law in the mutual information is given by $\alpha > (D + 1)/2$. Regarding the critical threshold $\alpha_{c}$ above which any systems satisfy the thermal area law, our numerical analysis concludes that $\alpha_{c} = (D+1) / 2$, since below $\alpha_c$ we provide explicit examples violating the area law.
Remarkably, this condition encompasses $\alpha<D$, where the thermodynamic extensivity is violated~\cite{RevModPhys.95.035002}. In this regime, most analytical techniques break down, and universal behaviors remain largely unexplored. 
While the numerical investigation suggests that the clustering property robustly holds even for thermodynamic nonextensive regimes, we demonstrate that the power-law clustering theorem can be rigorously derived for $\alpha > D$ by identifying a suitable temperature regime. Consequently, the thermal area law unconditionally holds for $\alpha > D$. Our key results are depicted schematically in Fig.~\ref{fig:outline}.

In addition, we investigate whether differences exist between quantum and classical correlations in the context of the thermal area law. By employing the logarithmic negativity proposed by Shapourian, Shiozaki, and Ryu for bilinear fermionic systems~\cite{PhysRevB.95.165101},  we demonstrate that this criterion equally governs the thermal area law for quantum entanglement.

\textit{Setup.---}
We consider a quantum system with $N$ qudits or fermions on the $D$-dimensional lattice. Let $\Lambda$ be the set of all sites on the lattice of $N$ sites. We consider the Hamiltonian $H$ in the $k$-local form:
\begin{align}
	H = \sum_{Z:|Z|\le k} h_Z \, , \label{ham}
\end{align}
where $h_{Z}$ is the local Hamiltonian on a subset $Z \subset \Lambda$ with site number $|Z|$.
We consider the long-range interactions satisfying the following condition:
\begin{align}
	J_{i,i'}:= \sum_{Z:Z\ni \{i,i'\}}\norm{ h_Z} \le  
	{g\over (1+\dist_{i,i'})^{\alpha} } \, . \label{main_interaction_cond}
\end{align}
Here, $J_{i,i'}$ is the maximum norm of the local Hamiltonian including the sites $i$ and $i'$, and $g$ is some constant. The symbol $d_{i,i'}$ stands for the distance between the sites $i$ and $i'$. The parameter $\alpha$ is the index of power-law decay in the interaction. We will provide examples of the Hamiltonian in Eqs.~(\ref{Model_1}) and (\ref{Model_2}) later. Note that the Hamiltonian does not contain the Kac factor for the thermodynamically nonextensive regime $\alpha <D$, since such a factor does not exist in realistic situations~\cite{richerme2014non,neyenhuis2017observation}.

The quantum Gibbs state $\rho_{\beta}$ at inverse temperature $\beta$ is defined as $\rho_{\beta} := e^{-\beta H} / \tr(e^{-\beta H})$. The mutual information of $\rho_{\beta}$ between two regimes $A$ and $B$ is defined as 
\begin{align}
	\mathcal{I}_{\rho_{\beta}}(A:B) := S(\rho_{\beta}^{A}) + S(\rho_{\beta}^{B}) - S(\rho_{\beta}^{AB}),
\end{align}
where $\rho_{\beta}^{X}$ is the reduced density matrix of $\rho_{\beta}$ on the subset $X\subseteq \Lambda$, and $S(\rho_{\beta}^{X}) := - \tr(\rho_{\beta}^{X} \log \rho_{\beta}^{X})$ is the von Neumann entropy. 
We define the correlation function of two observables $O_{A}$ and $O_{B}$ as 
\begin{align}
	\Cor_{\rho_{\beta}}(O_{A},O_{B}) := \tr(\rho_{\beta} O_{A} O_{B}) - \tr(\rho_{\beta} O_{A}) \cdot \tr(\rho_{\beta} O_{B}).
\end{align}

\textit{Main theorem.---}
We find that the thermal area law of the mutual information holds above some power-law threshold with the assumption of clustering. 

\textbf{Theorem 1:} {\it Let us assume that the correlation function of the quantum Gibbs state $\rho_{\beta}$ between two arbitrary operators $O_{i}$ and $O_{i'}$ supported on the sites $i$ and $i'$ satisfies the following power-law clustering property:
	\begin{align}
		\left|\Cor_{\rho_{\beta}}(O_{i},O_{i'})\right| \leq \frac{C}{d_{i,i'}^{\alpha}} \Vert O_{i} \Vert \cdot \Vert O_{i'} \Vert, \label{clustering}
	\end{align}
	where $C$ is an $\orderof{1}$ constant. Then, for the bipartition $A,B$ of $D$-dimensional lattice $\Lambda$ $(A \cup B = \Lambda)$ and $\alpha > (D+1) / 2$, the mutual information is upper bounded by
	\begin{align}
		\mathcal{I}_{\rho_{\beta}}(A:B) \leq \text{const.} \beta | \partial A | \, . 
		\label{main_thm}
\end{align}}
We postpone the sketch of proof to the end of the paper (the details are in the Supplemental Material (SM)~\cite{Supplement_thermal}). The theorem ensures that the thermal area law holds for the regime $\alpha > (D+1)/2$. This condition means that the thermal area law is much more robust than expected from a simple argument on the norm of the boundary interaction, which leads to the weaker condition of $\alpha>D+1$.
Remarkably, the condition partially includes the thermodynamically nonextensive regime, extending up to interaction decays of $r^{-1}$ (1D), $r^{-3/2}$ (2D), and $r^{-2}$ (3D).

\textit{Power-law clustering and unconditional thermal area law.---}
Theorem 1 suggests that the crucial physics behind the thermal area law is not only from the magnitude of the boundary interaction but also from the clustering property. Note that the clustering property is expected to be one of the robust physical properties in noncritical thermal phases, rendering it an inherently natural assumption~\cite{brandao2013area}. We here demonstrate that the power-law clustering property can be rigorously proven for $\alpha > D$ above a temperature threshold as shown in the following Theorem 2.

\textbf{Theorem 2:} {\it Under the general Hamiltonian (\ref{ham}) with (\ref{main_interaction_cond}) for the regime $\alpha > D$, the following clustering property holds for the temperatures above $\beta_c^{-1}$:
	\begin{align}
		\left|\Cor_{\rho_{\beta}}(O_{X},O_{Y})\right|  &\leq C \Vert O_{X} \Vert \cdot \Vert O_{Y} \Vert 
		{ |X| |Y| \, e^{{(|X| + |Y|)/ k}} \over d_{X,Y}^{\alpha}} \, , 
		\label{theor2}
	\end{align}
	where $d_{X,Y}=\min_{i\in X , j\in Y} d_{i,j}$ and $|X|$ and $|Y|$ are numbers of sites in the regions $X$ and $Y$, respectively. $C$ is an $\orderof{1}$ constant and the threshold temperature is given by $\beta_c =1/(8ugk)$ with $u=\sum_{j\in \Lambda} 1/(1+d_{i,j})^{\alpha}$.}

The proof is based on the cluster expansion technique. The details in the proof are provided in SM \cite{Supplement_thermal}. To the best of our knowledge, this is the first result to establish the power-law clustering theorem in quantum long-range interacting systems.

From this statement, the clustering property in Theorem 1 is not an assumption but rigorously holds above a threshold temperature as long as the thermodynamically extensive regime (i.e., $\alpha>D$) is considered. In particular, in 1D systems, the clustering property is believed to hold at arbitrary temperatures~\cite{Araki1969,kimura2024clustering}. Moreover, we stress that even in the thermodynamically nonextensive regime ($\alpha <D$), numerical calculations for the two models below show the validity of the clustering property.

\textit{Numerical verification.---}
We numerically verify our theorems in both integrable and nonintegrable systems, establishing the bound's tightness.

As a typical example of integrable systems, we use the following long-range bilinear fermion system:
\begin{align}
	H = - \sum_{i,j \in \Lambda} \frac{t_{i,j}}{\dist_{i,j}^{\alpha}} (c_{i}^{\dagger} c_{j} + c_{j}^{\dagger} c_{i}), \label{Model_1}
\end{align}
where $t_{i,j}$ is a hopping parameter of order $\mathcal{O}(1)$, and $c_{i}$ and $c_{i}^{\dagger}$ are the annihilation and creation operators of the spinless fermion at site $i$, respectively. We consider the Hamiltonian in one dimension (1D) and two dimensions (2D).

We first verify the clustering property~\eqref{clustering} in the Hamiltonian~\eqref{Model_1}~\cite{Latorre_2009,PhysRevB.94.214301}.
We analyze a 1D chain of $N = 1000$ sites and a 2D square lattice of side length $N = 40$, focusing on two-point correlation functions $\langle c_{i}^{\dagger} c_{i+r} \rangle$ in 1D and $\langle c_{{\bm i}}^{\dagger} c_{{\bm i}+{\bm r}} \rangle$ in 2D at $\beta = 2$. 
For general bilinear systems, we assign random values $\in [0,1]$ to the hopping variable and average over 1000 samples.
In Figs.~\ref{fig:bilinear_clustering_MI}(a) and (b), $|\langle c_{i}^{\dagger} c_{i+r} \rangle| \times r^{\alpha}$ (1D) and $|\langle c_{{\bm i}}^{\dagger} c_{{\bm i}+{\bm r}} \rangle| \times |{\bm r}|^{\alpha}$ (2D) saturate at large distances for any $\alpha$, confirming clustering in both dimensions, even for $\alpha <D$.
While we show the averaged data, we checked that individual data also satisfy the clustering property (not shown). 
All correlation functions in bilinear systems follow this property via the Wick theorem.

After establishing the clustering property, we affirm the thermal area law applies for $\alpha > (D+1)/2$, a sufficient condition whose optimality requires further investigation. 
We assess its tightness through numerical calculations on the same model.
The system is partitioned into subsystems $A$ and $B$, with $A$ as the first $N/2$ sites and $B$ the rest in 1D. For 2D, $A$ includes the first $(N \times N/2)$ sites, and $B$ the remainder. 
The mutual information $\mathcal{I}_{\rho_{\beta}}(A:B)$ is calculated with the same parameters as in the clustering property study.
Results are shown in Figs.~\ref{fig:bilinear_clustering_MI}(c) for 1D and~\ref{fig:bilinear_clustering_MI}(d) for 2D.
The mutual information increases with the system size if $\alpha < 1$ and it saturates to a constant value if $\alpha > 1$ in 1D.
The increasing trend of the mutual information, fitted to a power law, is detailed in SM (see Fig. S.2)~\cite{Supplement_thermal}.
In 2D, the mutual information divided by the boundary area $N$ grows for $\alpha < 1.5$ and steadies for $\alpha > 1.5$. 
These numerical results support that the thermal area law holds for $\alpha > (D+1) / 2$, and hence the condition is optimal.

\begin{figure}[t]
	\centering
	\includegraphics[width=0.5\textwidth]{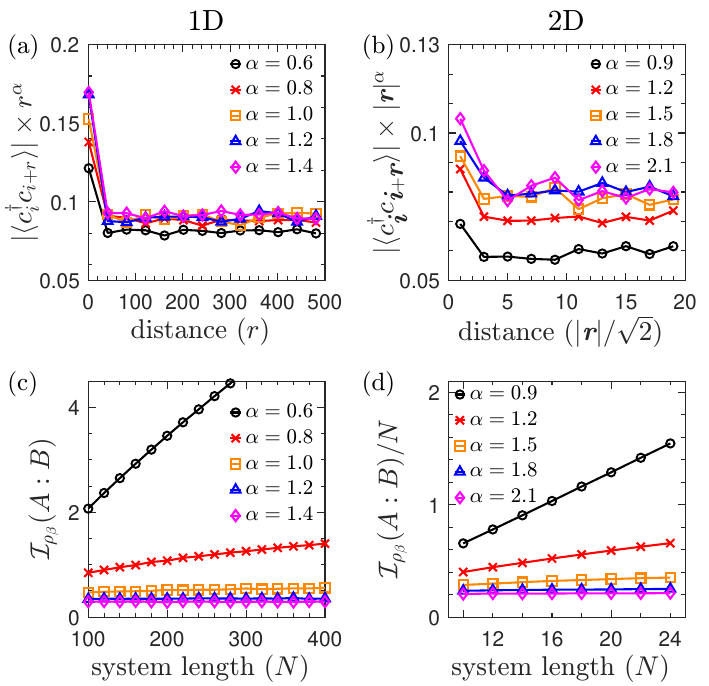}
	\caption{1D and 2D long-range bilinear fermions. (a) The two-point correlation function $|\langle c_{i}^{\dagger} c_{i+r} \rangle|$ in the 1D chain of $N = 1000$ sites, between the sites $i = 250$ and $i + r$, multiplied by $r^{\alpha}$ for various $\alpha$.
		(b) The two-point correlation function $|\langle c_{{\bm i}}^{\dagger} c_{{\bm i}+{\bm r}} \rangle|$ in the 2D square lattice of side length $N = 40$, between the sites ${\bm i} = (10,10)$ and ${\bm i} + {\bm r} = (10 + r,10 + r)$, multiplied by $|{\bm r}|^{\alpha}$ for various $\alpha$.
		(c) Mutual information $\mathcal{I}_{\rho_{\beta}}(A:B)$ in the 1D chain of $N$ sites between halves $A$ and $B$.
		(d) Mutual information $\mathcal{I}_{\rho_{\beta}}(A:B)$ in the 2D square lattice of side length $N$, between halves $A$ and $B$, divided by $N$.
		All figures are the average over 1000 samples of random variables $t_{i,j} \in [0,1]$ in Eq.~\eqref{Model_1} at $\beta = 2$.}
	\label{fig:bilinear_clustering_MI}
\end{figure}

In general, bilinear systems can be special in several physical aspects not only for the nonequilibrium properties~\cite{d2016quantum} but also for the static properties including the thermal entanglement~\cite{PhysRevA.78.022103,Bernigau_2015,barthel2017one}.
However, the clustering is expected to be robust, regardless of the (non)integrability, and hence the thermal area law should also hold universally.
We check this universality in a specific nonintegrable system: the 1D long-range Heisenberg spin-1/2 chain with $N$ sites,
\begin{align}
	H = \sum_{1 \leq i<j \leq N} \frac{1}{\dist_{i,j}^{\alpha}} {\bm S}_{i} \cdot {\bm S}_{j}, \label{Model_2}
\end{align}
where ${\bm S}_{i} = (S_{i}^{x},S_{i}^{y},S_{i}^{z})$ is the spin-1/2 operator  at site $i$.
We explore the clustering property of $\langle {\bm S}_{i} \cdot {\bm S}_{i+r} \rangle$ for $N = 200$ and $\beta = 2$ using the exponential tensor renormalization group (XTRG) algorithm~\cite{PhysRevX.8.031082,PhysRevB.100.045110} and the QSpace tensor library~\cite{Weichselbaum2012,PhysRevResearch.2.023385} to construct the matrix product operator for the Gibbs state $\rho_{\beta}$.
See Ref.~\cite{ft0} for details.
Figure~\ref{fig:Heisenberg_clustering_TFDEE}(a) shows $|\langle {\bm S}_{i} \cdot {\bm S}_{i+r} \rangle| \times r^{\alpha}$, demonstrating clustering via constant upper bounds.
Furthermore, we confirm the clustering property in the two-dimensional long-range Heisenberg and XX models (see SM~\cite{Supplement_thermal}). 
These findings strongly suggest that clustering universally holds.

\begin{figure}[t]
	\centering
	\includegraphics[width=0.5\textwidth]{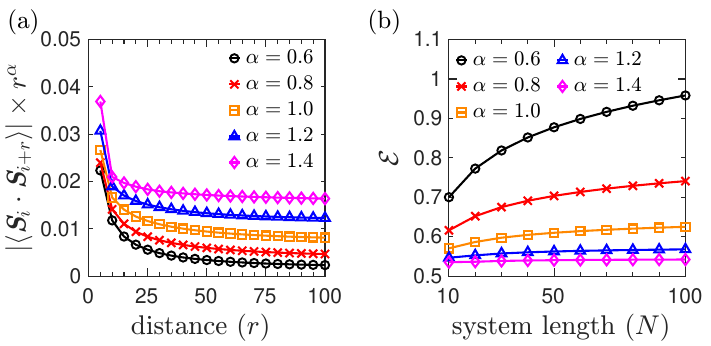}
	\caption{1D long-range Heisenberg chain. (a) The two-point correlation function $|\langle {\bm S}_{i} \cdot {\bm S}_{i+r} \rangle|$ multiplied by $r^{\alpha}$ in the 1D chain with $N = 200$ sites, between the sites $i = 50$ and $i+r$, at $\beta = 2$ for various $\alpha$.
		(b) The entanglement entropy Eq.~\eqref{TFDEE} of the TFD state of the Gibbs state at $\beta = 2$ is computed between the two halves of the system for various $\alpha$ with varying $N$.
		We employ the XTRG algorithm with the numerical details in Ref.~\cite{ft0}.}
	\label{fig:Heisenberg_clustering_TFDEE}
\end{figure}

To verify the thermal area law, we compute the entanglement entropy ($\mathcal{E}$) for the thermofield double (TFD) state of Gibbs state $\rho_{\beta}$ as it upper bounds the mutual information~\cite{Cottrell2019}.
We consider two identical copies $\mathcal{H}_{L}$ and $\mathcal{H}_{R}$ of the original Hilbert space $\mathcal{H}$.
For the eigenvalues $\{E_{n}\}$ and eigenstates $\{|n \rangle\}$ of the Hamiltonian, the TFD state of the Gibbs state $\rho_{\beta} = \sum_{n} e^{-\beta E_{n}} / \tr(e^{-\beta H}) |n \rangle \langle n|$ is defined as 
\begin{align}
	|\mathrm{TFD} \rangle := \frac{1}{\sqrt{\tr(e^{-\beta H})}} \sum_{n} e^{-\beta E_{n} / 2} |n \rangle_{L} \otimes |n \rangle_{R}.
\end{align}
Partitioning the system into halves $A$ and $B$, we define $\mathcal{E}$ as the von Neumann entropy of the reduced density matrix $\sigma_{A_{L},A_{R}} = \tr_{B_{L},B_{R}}(|\mathrm{TFD} \rangle \langle \mathrm{TFD}|)$:
\begin{align}
	\mathcal{E} := S(\sigma_{A_{L},A_{R}}) = - \tr(\sigma_{A_{L},A_{R}} \log \sigma_{A_{L},A_{R}}), \label{TFDEE}
\end{align}
where $A_{L}$ and $A_{R}$ ($B_{L}$ and $B_{R}$) are the copies of the original subsystem $A$ ($B$). 
Then the mutual information is upper-bounded by $2 \mathcal{E}$, i.e. $\mathcal{I}_{\rho_{\beta}}(A:B) \leq 2 \mathcal{E}$~\cite{PhysRevA.91.042323,PhysRevX.11.011047}.
We show the results of $\mathcal{E}$ in Fig.~\ref{fig:Heisenberg_clustering_TFDEE}(b). 
$\mathcal{E}$ trends upward for $\alpha < 1$ and stabilizes for $\alpha > 1$, supporting the thermal area law's optimality condition.
While this discussion focuses solely on Hamiltonian~\eqref{Model_2}, it is worth noting that similar behaviors are numerically demonstrated even when the disorder is added into the Hamiltonian (see SM~\cite{Supplement_thermal}).

\textit{Thermal area law in the quantum entanglement.---}
The mutual information includes both classical and quantum correlations. 
We now extract purely quantum correlations from the Gibbs state to analyze the thermal area law condition.
To this end, we investigate bilinear fermionic systems~\eqref{Model_1} using Shapourian, Shiozaki, and Ryu's (SSR) logarithmic negativity~\cite{PhysRevB.95.165101} to quantify quantum entanglement between subsystems $A$ and $B$ in a mixed state efficiently. 
The SSR logarithmic negativity is formulated through the partial time-reversal transform $R_{A}$, yielding
\begin{align}
	E_{\mathrm{SSR}}(\rho) := \log \Vert \rho^{R_{A}} \Vert_{1} \, ,
\end{align}
where $\Vert \cdots \Vert_{1}$ is the trace norm.
Details on the SSR logarithmic negativity for bilinear systems are in SM~\cite{Supplement_thermal}. We calculate this for 1D and 2D Hamiltonian systems~\eqref{Model_1} with identical parameters and partitioning as the mutual information, showing results in Figs.~\ref{fig:SSR_Negativity}(a) for 1D and~\ref{fig:SSR_Negativity}(b) for 2D.
These figures demonstrate that the SSR logarithmic negativity shows the same behavior as the mutual information in Figs~\ref{fig:bilinear_clustering_MI}(c) and~\ref{fig:bilinear_clustering_MI}(d).
Hence, the figures indicate that under the same conditions, namely $\alpha > (D+1) / 2$, quantum entanglement likewise follows the thermal area law.

\begin{figure}[t]
	\centering
	\includegraphics[width=0.5\textwidth]{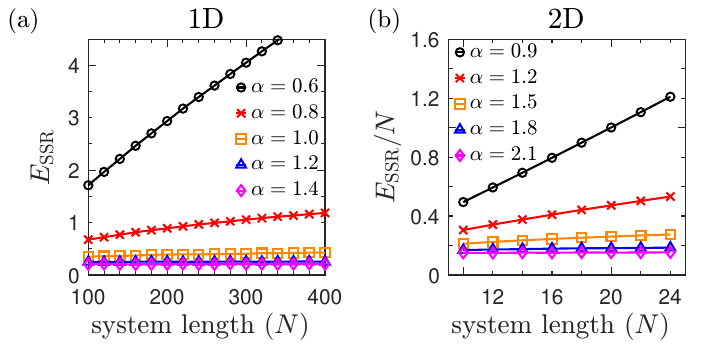}
	\caption{The SSR negativity $E_{\mathrm{SSR}}$ of long-range bilinear fermions. (a) $E_{\mathrm{SSR}}$ of the 1D chain of $N$ sites and (b) $E_{\mathrm{SSR}} / N$ of the 2D square lattice with a side length $N$ between one half of the system and the other half for various $\alpha$.
		The figures are averaged over the 1000 samples of random variables $t_{i,j} \in [0,1]$ in Eq.~\eqref{Model_1} at $\beta = 2$.}
	\label{fig:SSR_Negativity}
\end{figure}

\textit{Proof sketch of main theorem.---}
We here provide the outline of the proof of the main theorem. 
For the sake of simplicity, we focus on the simple case where $h_{Z}$ is $h_{i,j}$ that acts only on two different sites $i,j \in \Lambda$. The proof for the general case is in SM \cite{Supplement_thermal}.

The first step is to leverage the remarkable inequality demonstrated in Ref.~\cite{PhysRevLett.100.070502}.
Consider the total Hamiltonian $H = H_{A} + H_{B} + H_{\partial A}$, where $H_{A}$, $H_{B}$ are supported only on regions $A$ and $B$ respectively, and $H_{\partial A}$ represents the interaction  between these regions.
Reference~\cite{PhysRevLett.100.070502} utilized the Gibbs variational principle, expressed as
\begin{align}
	F(\rho ) - F(\rho_{\beta}) & = \beta^{-1} D(\rho || \rho_{\beta} ) \ge 0 \, , 
\end{align}
where $F(\rho):= \tr(H \rho )+ \beta^{-1}\tr [\rho \log (\rho)]$ is the nonequilibrium free energy and $D(\rho || \sigma)=\tr (\rho \log \rho) - \tr (\rho \log \sigma)$ is the quantum relative entropy. 
By substituting $\rho = \rho_{\beta}^{A} \otimes \rho_{\beta}^{B}$ into this inequality, Ref.~\cite{PhysRevLett.100.070502} established a bound on the mutual information:
\begin{align}
	\mathcal{I}_{\rho_{\beta}}(A:B) &\leq \beta \, \mathrm{tr}\left[(\rho_{\beta}^{A} \otimes \rho_{\beta}^{B} - \rho_{\beta}^{AB}) H_{\partial A}\right]. \label{boundary_int_ineq}
\end{align}

The second crucial step is to reduce quantum complexity by utilizing the clustering property.
We can represent the boundary interaction as $H_{\partial A} = \sum_{i \in A} \sum_{j \in B} h_{i,j}$, and each operator $h_{i,j}$ can be expanded as $h_{i,j} = \sum_{s=1}^{d_{0}^{4}} h_{i}^{(s)} \otimes h_{j}^{(s)}$, with $h_{i}^{(s)}$ exclusively supported on site $i$ and $h_{j}^{(s)}$ on $j$.
Here, $d_0$ denotes the Hilbert space dimension of a single site. The operators $h_{i}^{(s)} \otimes h_{j}^{(s)}$ can be chosen to be Hermitian and mutually orthogonal with respect to the Hilbert-Schmidt inner product.
Inserting this decomposition into Eq.~\eqref{boundary_int_ineq} yields
\begin{align}
	\mathcal{I}_{\rho_{\beta}}(A:B) &\leq \beta \sum_{i \in A}\sum_{j \in B}\sum_{s = 1}^{d_{0}^{4}} \left|\mathrm{Cor}_{\rho_{\beta}}(h_{i}^{(s)},h_{j}^{(s)})\right| \, . 
\end{align}
The assumption of the clustering property (\ref{clustering}) leads to
\begin{align}
	\mathcal{I}_{\rho_{\beta}}(A:B) \leq C \sum_{i \in A} \sum_{j \in B}  \sum_{s = 1}^{d_{0}^{4}} \beta \Vert h_{i}^{(s)} \Vert\cdot \Vert h_{j}^{(s)} \Vert d_{i,j}^{-\alpha} . \label{derivation_1}
\end{align}
By the condition \eqref{main_interaction_cond}, we have
\begin{align}
	\max_{s} \br{\Vert h_{i}^{(s)} \Vert \cdot \Vert h_{j}^{(s)} \Vert} \leq d_{0}  J_{i,j}  \leq \frac{d_{0}g}{d_{i,j}^{\alpha}}. \label{derivation_2}
\end{align}
Combining Eq.~\eqref{derivation_1} and Eq.~\eqref{derivation_2}, we finally obtain
\begin{align}
	\mathcal{I}_{\rho_{\beta}}(A:B) \leq \sum_{i \in A} \sum_{j \in B} \frac{\beta d_{0}^{5} g C}{d_{i,j}^{2\alpha}}. \label{derivation_3} 
\end{align}
It can be proven that the summation of the distance $\sum_{i \in A} \sum_{j \in B} d_{i,j}^{- 2\alpha}$ is upper bounded by the boundary area when $2 \alpha > D+1$. Therefore, inequality~\eqref{derivation_3} leads to the desired inequality \eqref{main_thm}.

\textit{Summary and outlook.---}
We consider the validity of the thermal area law in the systems with long-range interactions $r^{-\alpha}$. Under the assumption of the clustering property, we derive the critical threshold $\alpha_{\rm c}:= (D+1)/2$ above which any systems obey the thermal area law of the mutual information (See Fig.\ref{fig:outline}). Remarkably, the regime covers the thermodynamically nonextensive regime.  
This criteria potentially may allow for an efficient representation of quantum Gibbs states exhibiting a power-law decay up to $r^{-\alpha_{\rm c}}$. 
Given this criterion, it is a crucial future problem  
to develop an efficiency-guaranteed algorithm for simulating long-range interacting systems at finite temperatures, as well as constructing tensor network states with polynomial bond dimensions~\cite{PhysRevB.91.045138,PhysRevX.4.031019}.
Furthermore, experimental verification of the thermal area law in long-range interacting systems appears feasible using the TFD state. A trapped ion quantum computer enables the preparation of the TFD state at finite temperatures, as outlined in Ref.~\cite{zhu2020generation}. Subsequent measurements of the Rényi entanglement entropy of this state, utilizing the methodology described in Ref.~\cite{islam2015measuring}, could potentially demonstrate its conformity to the thermal area law when $\alpha > (D+1)/2$.

\bibliography{LR_Thermal_Area}

\section*{acknowledgments}

{~}\\
D. K. thanks M. L. Kim for helpful discussions about QSpace.
D. K. and T. K. acknowledge the Hakubi projects of RIKEN.  
T. K. was supported by JST PRESTO (Grant No. JPMJPR2116), ERATO (Grant No. JPMJER2302), and JSPS Grants-in-Aid for Scientific
Research (No. JP24H00071), Japan.
K. S. was supported by JSPS Grants-in-Aid for Scientific
Research (No. JP19H05603, No. JP19H05791 and No. JP23H01099).

\clearpage
\newpage

\renewcommand\thefootnote{*\arabic{footnote}}

\addtocounter{section}{0}

\addtocounter{equation}{-19}

\renewcommand{\theequation}{S.\arabic{equation}}

\renewcommand{\thesection}{S.\Roman{section}}
\begin{widetext}

\begin{center}
	{\large \bf Supplemental Material for  ``Thermal Area Law in Long-Range Interacting Systems''}\\
	\vspace*{0.3cm}
	Donghoon Kim$^{1}$, Tomotaka Kuwahara$^{1,2,3}$, and Keiji Saito$^{4}$ \\
	\vspace*{0.1cm}
	$^{1}${\small \it Analytical Quantum Complexity RIKEN Hakubi Research Team, RIKEN Center for Quantum Computing (RQC), Wako, Saitama 351-0198, Japan} \\
	$^{2}${\small \it RIKEN Cluster for Pioneering Research (CPR), Wako, Saitama 351-0198, Japan}\\
	$^{3}${\small \it PRESTO, Japan Science and Technology (JST), Kawaguchi, Saitama 332-0012, Japan}\\
	$^{4}${\small \it Department of Physics, Kyoto University, Kyoto 606-8502, Japan}
\end{center}

\tableofcontents

\section{Setup}
\label{sec2}

We consider a quantum system, where we use $\Lambda$ to represent the set of sites. 
Given an arbitrary subset $X$ in $\Lambda$, $X\subseteq \Lambda$, we denote the cardinality of the set $X$, i.e., the number of sites contained in $X$, by $|X|$. 
For any two subsets $X, Y$ in $\Lambda$, $X, Y \subseteq \Lambda$, we define the distance between the two sets, $\dist_{X,Y}$, as the shortest path length of the graph that connects $X$ and $Y$. We remark that when $X$ intersects with $Y$, $X\cap Y \neq \emptyset$, the distance between the sets $X$ and $Y$ is zero: $\dist_{X,Y}=0$. 
We denote the complementary set of $X$ as $X^\co$, namely $X^\co:= \Lambda\setminus X$, and the surface subset of $X$ as $\partial X$, i.e. $\partial X:=\{ i\in X| \dist_{i,X^\co}=1\}$, respectively. 
Also, for an arbitrary operator $O$, we denote the support of $O$ by $\Supp(O)$. 
We often describe the support of the operator explicitly by adding an Index of lower right subscript such as $O_X$ with $\Supp(O_X)=X$.

We consider a $k$-local Hamiltonian $H$ with long-range interactions: 
\begin{align}
	H = \sum_{Z:|Z|\le k} h_Z \, ,  \label{def:Hamiltonian}
\end{align}
where $Z$ is a subset indicating the interacting sites, and $h_Z$ is the local Hamiltonian. Note that $|Z|$ is the number of the interacting sites, and hence $h_Z$ means $Z$-body interaction. We assume that the interaction form satisfies the following property
\begin{align}
	\label{def_short_range_long_range}
	J_{i,i'}:= \sum_{Z:Z\ni \{i,i'\}}\norm{ h_Z} \le  
	{ g \over (1+\dist_{i,i'})^{\alpha} } .
\end{align} 
Here, $J_{i,i'}$ is the maximum operator norm of the local Hamiltonians containing the sites $i$ and $i'$. 

We consider the quantum Gibbs state $\rho_{\beta}$ with a fixed inverse temperature $\beta$ defined as follows:
\begin{align}
	\rho_{\beta} := e^{-\beta H}/Z ,\quad Z= \tr (e^{-\beta H}).
\end{align}
The mutual information of $\rho_{\beta}$ between two subsets $A$ and $B$ is defined as 
\begin{align}
	\mathcal{I}_{\rho_{\beta}}(A:B) := S(\rho_{\beta}^{A}) + S(\rho_{\beta}^{B}) - S(\rho_{\beta}^{AB}),
\end{align}
where $\rho_{\beta}^{X}$ is the density matrix of $\rho_{\beta}$ for the regime $X \in \{A,B,AB\}$ and $S(\rho_{\beta}^{X}) := - \tr(\rho_{\beta}^{X} \log \rho_{\beta}^{X})$ is the von Neumann entropy. 
The standard correlation function of the density matrix $\rho$ between two observables $O_{A}$ and $O_{B}$ is defined as
\begin{align}
	\Cor_{\rho}(O_{A},O_{B}) := \tr(\rho O_{A} O_{B}) - \tr(\rho O_{A}) \cdot \tr(\rho O_{B}).
\end{align}

\section{Proof of Main Theorem in a General Setting}

\begin{theorem}
	\label{main_thm_2}
	Let us assume that the correlation function of the quantum Gibbs state $\rho_{\beta}$ between two arbitrary operators $O_{X}$ and $O_{Y}$ supported on the subsets $X$ and $Y$ in $\Lambda$ satisfies the following power-law clustering property:
	\begin{align}
		\left|\Cor_{\rho_{\beta}}(O_{X},O_{Y})\right| \leq \frac{C}{d_{X,Y}^{\alpha}} \Vert O_{X} \Vert \cdot \Vert O_{Y} \Vert,
	\end{align}
	where $C$ is an $\orderof{1}$ constant.
	Then, for the bipartition $A,B$ of $D$-dimensional lattice $\Lambda$ ($A \cup B = \Lambda$) and $\alpha > (D+1) / 2$, the mutual information is upper bounded by
	\begin{align}
		\mathcal{I}_{\rho_{\beta}}(A:B) \leq \beta \cdot C \cdot |\partial A| \cdot \text{const}. \label{supp_main_thm}
	\end{align}
\end{theorem}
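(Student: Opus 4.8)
The plan is to follow the sketch given in the main text but carry it out at the level of general $k$-local interactions. First I would use the Gibbs variational principle exactly as in the two-body case: for any density matrix $\rho$ one has $F(\rho)-F(\rho_\beta)=\beta^{-1}D(\rho\|\rho_\beta)\ge 0$, and substituting $\rho=\rho_\beta^A\otimes\rho_\beta^B$ gives
\begin{align}
\mathcal{I}_{\rho_\beta}(A:B)\le \beta\,\tr\!\left[(\rho_\beta^A\otimes\rho_\beta^B-\rho_\beta^{AB})H_{\partial A}\right],
\end{align}
where $H_{\partial A}=\sum_{Z:Z\cap A\neq\emptyset,\ Z\cap B\neq\emptyset}h_Z$ is the sum of all interaction terms straddling the cut. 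This step is identical to Ref.~\cite{PhysRevLett.100.070502} and needs no new idea.

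Next I would decompose each straddling term $h_Z$ into a tensor-product form across the cut. Writing $Z=Z_A\sqcup Z_B$ with $Z_A=Z\cap A$, $Z_B=Z\cap B$, expand $h_Z=\sum_s h_{Z_A}^{(s)}\otimes h_{Z_B}^{(s)}$ in an operator basis, so that $\tr[(\rho_\beta^A\otimes\rho_\beta^B-\rho_\beta^{AB})h_Z]=\sum_s\Cor_{\rho_\beta}(h_{Z_A}^{(s)},h_{Z_B}^{(s)})$; there are at most $d_0^{2(|Z_A|+|Z_B|)}\le d_0^{2k}$ basis elements, and each can be normalized so that $\Vert h_{Z_A}^{(s)}\Vert\,\Vert h_{Z_B}^{(s)}\Vert\le\Vert h_Z\Vert$. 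Applying the clustering hypothesis with $d_{Z_A,Z_B}\ge d_{i,j}$ whenever $i\in Z_A$, $j\in Z_B$, and then the interaction bound $\Vert h_Z\Vert\le J_{i,j}\le g/(1+d_{i,j})^\alpha$ for any such pair, I obtain
\begin{align}
\mathcal{I}_{\rho_\beta}(A:B)\le \beta\, d_0^{2k}\, g\, C\sum_{Z}\frac{1}{d_{Z_A,Z_B}^{2\alpha}},
\end{align}
and bounding the sum over straddling $Z$ by a sum over pairs $(i,j)\in A\times B$ (each $Z$ charged to one such pair, with a multiplicity factor absorbed into the constant) reduces everything to controlling $\sum_{i\in A}\sum_{j\in B}d_{i,j}^{-2\alpha}$.

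The final and genuinely combinatorial step is the geometric lemma: on a $D$-dimensional lattice, for a bipartition $A\cup B=\Lambda$ one has $\sum_{i\in A}\sum_{j\in B}d_{i,j}^{-2\alpha}\le \text{const}\cdot|\partial A|$ whenever $2\alpha>D+1$. I would prove this by slicing: group the pairs according to the site $i'\in\partial A$ that realizes (up to a constant) the geodesic from $i$ to $B$, write $d_{i,j}\gtrsim d_{i,i'}+d_{i',j}$, and sum the resulting shifted power-law over the $D$-dimensional neighborhoods of $i'$ on each side; convergence of $\sum_{r\ge 1}r^{D-1}(r)^{-\beta}$-type sums needs exactly $2\alpha-(D-1)>1$ on each side once the two one-sided sums are convolved, i.e. $2\alpha>D+1$. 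I expect this lattice-sum estimate — getting the boundary-area scaling with the sharp exponent and handling the assignment of each pair to a boundary site without double counting — to be the main obstacle; the rest is bookkeeping with constants depending on $d_0$, $k$, $g$, $C$ and the lattice geometry.
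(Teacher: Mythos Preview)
Your overall strategy matches the paper's proof, but one step fails as written. After applying clustering and the normalization $\|h_{Z_A}^{(s)}\|\,\|h_{Z_B}^{(s)}\|\le\|h_Z\|$, you replace $\|h_Z\|$ by the pointwise bound $g/d_{Z_A,Z_B}^{\alpha}$ and arrive at $\sum_Z d_{Z_A,Z_B}^{-2\alpha}$, which you then claim can be charged to pairs $(i,j)$ with an $\orderof{1}$ multiplicity. That is not true for genuine $k$-body interactions: the number of straddling $Z$ containing a fixed pair $\{i,j\}$ can grow with the system size, while the interaction hypothesis only controls $\sum_{Z\ni\{i,j\}}\|h_Z\|\le g/(1+d_{i,j})^{\alpha}$, not the \emph{count} of nonzero terms. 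Once $\|h_Z\|$ has been discarded, nothing prevents $\sum_Z d_{Z_A,Z_B}^{-2\alpha}$ from being volume-large.

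The fix is exactly what the paper does: do not bound $\|h_Z\|$ termwise. Keep $\|h_Z\|$ in the sum, replace $d_{Z_A,Z_B}^{-\alpha}$ by $d_{i,j}^{-\alpha}$ for the pair realizing the minimum (note also that your inequality is reversed: $d_{Z_A,Z_B}\le d_{i,j}$ for every $i\in Z_A$, $j\in Z_B$, with equality at the minimizer), overcount by summing over all $(i,j)\in A\times B$ and all $Z\ni\{i,j\}$, and only then invoke $\sum_{Z\ni\{i,j\}}\|h_Z\|\le J_{i,j}\le g/d_{i,j}^{\alpha}$ on the sum over $Z$. This absorbs the $Z$-sum into the interaction bound and yields $\sum_{i\in A}\sum_{j\in B}d_{i,j}^{-2\alpha}$ directly, with no multiplicity issue. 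For the final lattice-sum estimate the paper does not reprove it but simply cites~\cite{PhysRevLett.128.010603}; your slicing argument is a reasonable route if you want the proof self-contained.
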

\begin{proof}
	We partition the lattice $\Lambda$ into two subsets $A,B \subset \Lambda$ and write the Hamiltonian as $H = H_{A} + H_{B} + H_{\partial A}$. Here, $H_{A}$, $H_{B}$ are the Hamiltonian supported only on $A,B$, respectively, and $H_{\partial A}$ is the interaction between $A$ and $B$, i.e., 
	\begin{align}
		H_{\partial A} = \sum_{\substack{Z : Z \cap A \neq \emptyset \\ Z \cap B \neq \emptyset}} h_{Z}. \label{supp_derivation_1}
	\end{align}
	Using the non-negativity of the quantum relative entropy $D(\rho \Vert \sigma) := \tr(\rho(\log \rho - \log \sigma))$ yields 
	\begin{align}
		D(\rho \Vert \rho_{\beta}) = \beta \tr(\rho H) - S(\rho) + \log Z = \beta F_{\beta}(\rho) - \beta F_{\beta}(\rho_{\beta}) \geq 0,
	\end{align}
	for an arbitrary density matrix $\rho$. Here, $F_{\beta}(\rho)$ is the free energy defined as
	\begin{align}
		F_{\beta}(\rho) := \tr(\rho H) - \beta^{-1}S(\rho),
	\end{align}
	and it becomes $F_{\beta}(\rho_{\beta}) = -\beta^{-1} \log Z$ at equilibrium. Substituting $\rho = \rho_{\beta}^{A} \otimes \rho_{\beta}^{B}$ and using $\mathcal{I}_{\rho_{\beta}}(A:B) = S(\rho_{\beta}^{A} \otimes \rho_{\beta}^{B}) - S(\rho_{\beta}^{AB})$ and $\tr((H_{A} + H_{B}) \rho_{\beta}^{A} \otimes \rho_{\beta}^{B}) = \tr((H_{A} + H_{B}) \rho_{\beta}^{AB})$, we obtain
	\begin{align}
		\mathcal{I}_{\rho_{\beta}}(A:B) \leq \beta \cdot \tr((\rho_{A} \otimes \rho_{B} - \rho_{AB}) H_{\partial A}). \label{supp_derivation_2}
	\end{align}
	For the $k$-local Hamiltonian, each local term $h_{Z}$ can be expanded by a sum of the tensor products of two operators supported on $Z_{A} = Z \cap A$ and $Z_{B} = Z \cap B$: 
	\begin{align}
		h_{Z} = \sum_{s = 1}^{d_0^{2k}} h_{Z_{A}}^{(s)} \otimes h_{Z_{B}}^{(s)}. \label{supp_derivation_3}
	\end{align}
	Here, $d_0$ denotes the Hilbert space dimension of a single site. The operators $h_{i}^{(s)} \otimes h_{j}^{(s)}$ can be chosen to be Hermitian and orthogonal to each other with respect to the Hilbert-Schmidt inner product, i.e. $\tr\br{(h_{Z_{A}}^{(s)} \otimes h_{Z_{B}}^{(s)})^{\dagger} (h_{Z_{A}}^{(s')} \otimes h_{Z_{B}}^{(s')})} \propto \delta_{s,s'}$.
	Equations~\eqref{supp_derivation_1},~\eqref{supp_derivation_2} and~\eqref{supp_derivation_3} lead to
	\begin{align}
		\mathcal{I}_{\rho_{\beta}}(A:B) \leq \sum_{\substack{Z : Z \cap A \neq \emptyset \\ Z \cap B \neq \emptyset}} \sum_{s=1}^{d_0^{2k}} \beta \cdot \tr\br{(\rho_{A} \otimes \rho_{B} - \rho_{AB}) h_{Z_{A}}^{(s)} \otimes h_{Z_{B}}^{(s)}} \leq \sum_{\substack{Z : Z \cap A \neq \emptyset \\ Z \cap B \neq \emptyset}} \sum_{s=1}^{d_0^{2k}} \beta \cdot \left|\mathrm{Cor}_{\rho_{AB}}(h_{Z_{A}}^{(s)},h_{Z_{B}}^{(s)})\right|.
	\end{align}
	The assumption of the clustering property gives
	\begin{align}
		\mathcal{I}_{\rho_{\beta}}(A:B) &\leq \sum_{\substack{Z : Z \cap A \neq \emptyset \\ Z \cap B \neq \emptyset}} \sum_{s=1}^{d_0^{2k}} \beta \cdot \Vert h_{Z_{A}}^{(s)} \Vert \Vert h_{Z_{B}}^{(s)} \Vert \frac{C}{\dist_{Z_{A},Z_{B}}^{\alpha}} \\
		&\leq \sum_{i \in A} \sum_{j \in B} \sum_{\substack{Z : Z_{A} \ni i \\ Z_{B} \ni j}}  \sum_{s=1}^{d_0^{2k}} \beta \cdot \Vert h_{Z_{A}}^{(s)} \Vert \Vert h_{Z_{B}}^{(s)} \Vert \frac{C}{\dist_{i,j}^{\alpha}} \\
		&\leq \sum_{i \in A} \sum_{j \in B} \sum_{\substack{Z : Z_{A} \ni i \\ Z_{B} \ni j}} \beta \cdot d_0^{2k} \max_{s} \Vert h_{Z_{A}}^{(s)} \Vert \Vert h_{Z_{B}}^{(s)} \Vert \frac{C}{\dist_{i,j}^{\alpha}}. \label{supp_derivation_4}
	\end{align}
	In the second inequality, we have used the fact that $\dist_{Z_{A},Z_{B}}^{-\alpha}$ can be expressed as $\dist_{i,j}^{-\alpha}$ for some $i \in Z_{A}$ and $j \in Z_{B}$. In the third inequality, we have used $\sum_{s = 1}^{d_0^{2k}} \Vert h_{Z_{A}}^{(s)} \Vert \Vert h_{Z_{B}}^{(s)} \Vert \leq d_0^{2k} \max_{s} \Vert h_{Z_{A}}^{(s)} \Vert \Vert h_{Z_{B}}^{(s)} \Vert$.
	Note that
	\begin{align}
		\max_{s} \Vert h_{Z_{A}}^{(s)} \Vert \Vert h_{Z_{B}}^{(s)} \Vert \leq \sqrt{\tr(h_{Z}^{\dagger} h_{Z})} \leq d_{0}^{k/2} \Vert h_{Z} \Vert.
	\end{align}
	By the condition Eq.~\eqref{def_short_range_long_range} of the long-range interaction, we obtain
	\begin{align}
		\sum_{\substack{Z : Z_{A} \ni i \\ Z_{B} \ni j}} \max_{s} \Vert h_{Z_{A}}^{(s)} \Vert \Vert h_{Z_{B}}^{(s)} \Vert \leq d_{0}^{k/2} \sum_{Z : Z \ni \{i,j\}} \Vert h_{Z} \Vert \leq \frac{d_{0}^{k/2} g}{\dist_{i,j}^{\alpha}}.\label{supp_derivation_5}
	\end{align}
	Combining Eq.~\eqref{supp_derivation_4} and Eq.~\eqref{supp_derivation_5},
	\begin{align}
		\mathcal{I}_{\rho_{\beta}}(A:B) \leq \sum_{i \in A} \sum_{j \in B} \frac{\beta d_0^{5k/2} g C}{\dist_{i,j}^{2\alpha}}. \label{supp_derivation_6}
	\end{align}
	It was proven that $\sum_{i \in A} \sum_{j \in B} \dist_{i,j}^{-2 \alpha}$ is upper bounded by the boundary area $|\partial A|$ when $2 \alpha > D+1$~\cite{PhysRevLett.128.010603}. With Eq.~\eqref{supp_derivation_6}, we finally obtain 
	\begin{align}
		\mathcal{I}_{\rho_{\beta}}(A:B) \leq \beta \cdot C \cdot |\partial A| \cdot \text{const},
	\end{align}
	for $2 \alpha > D + 1$.
\end{proof}


\section{Power-law clustering theorem for $\alpha >D $}
\label{sec3}
We consider the regime $\alpha > D$, so that the extensivity property holds as
\begin{align} 
	\label{extensive_cond}
	\widehat{H_L}&:= \sum_{Z: Z\cap L\neq \emptyset} h_Z = H- H_{L^\co} \, , \\
	\norm{\widehat{H_L}} &\le \sum_{i\in L} \sum_{Z \in i} \| h_Z \| \le gu |L| \, ,  
\end{align}
for an arbitrary subset $L\subseteq \Lambda$. Here $u:=2^{\alpha} \sum_{j\in \Lambda} 1/(1+ d_{i,j})^{\alpha}$.
Then, the clustering theorem for the bipartite correlations is rigorously proven as

\begin{theorem}
	{\it
		\label{main_thm_1}
		The following relation holds for long-range interacting systems for the temperatures above $\beta_c^{-1}$:
		\begin{align}
			\left|\Cor_{\rho_{\beta}}(O_{X},O_{Y})\right| &\leq C \Vert O_{X} \Vert \cdot \Vert O_{Y} \Vert 
			{ |X| |Y| \, e^{{(|X| + |Y|)/ k}} \over d_{X,Y}^{\alpha}} \, , 
			\label{theor2}
		\end{align}
		where $C$ is an $\orderof{1}$ constant and the threshold temperature is given by $\beta_c=W(1/2 e^2 )/ (2 g u k)$ with the Lambert $W$ function (i.e., the inverse function of $x e^x$) and $u=2^{\alpha} \sum_{j\in \Lambda} 1/(1+d_{i,j})^{\alpha}$. We can lower-bound $\beta_c$ as $\beta_c \ge 1/ (16  g ue k)$ with the relation $W(1/2e^2) > 1/(8e)$.
	}
\end{theorem}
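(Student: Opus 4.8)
\textbf{Proof proposal for Theorem~\ref{main_thm_1}.}
The plan is to control the bipartite correlation function $\Cor_{\rho_\beta}(O_X,O_Y)$ via the cluster expansion of the partition function and its derivatives. First I would write $\tr(\rho_\beta O_X O_Y)$, $\tr(\rho_\beta O_X)$ and $\tr(\rho_\beta O_Y)$ in terms of modified partition functions $Z[\lambda,\mu] = \tr(e^{-\beta H + \lambda O_X + \mu O_Y})$, so that the correlation function becomes a mixed derivative $\partial_\lambda \partial_\mu \log Z[\lambda,\mu]$ evaluated at $\lambda=\mu=0$. The high-temperature cluster expansion (in the spirit of the Hamze--de Freitas--Selby / Kliesch--Gogolin--Kastoryano approach adapted to power-law interactions) expresses $\log Z$ as a sum over connected clusters of interaction terms; the mixed derivative then picks out only those clusters that touch \emph{both} $\Supp(O_X)=X$ and $\Supp(O_Y)=Y$. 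Any such cluster must contain a connected chain of interaction bonds linking $X$ to $Y$, and since $d_{X,Y}$ is the graph distance, such a chain has total ``length'' at least $d_{X,Y}$ in an appropriate weighted sense.

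Second, the core of the argument is a convergence/decay estimate for the weighted sum over connecting clusters. For short-range models one gets exponential decay $e^{-c\,d_{X,Y}}$; here the bonds carry weights $J_{i,i'}\le g/(1+d_{i,i'})^\alpha$, and the relevant combinatorial sum is of the form $\sum_{\text{paths } i\to j} \prod_{\text{edges}} (\beta g k)\,(1+d_{\cdot,\cdot})^{-\alpha}$. The extensivity bound \eqref{extensive_cond}, equivalently the finiteness of $u=\sum_{j\in\Lambda}(1+d_{i,j})^{-\alpha}<\infty$ for $\alpha>D$, makes each ``step'' of a path summable with total weight $\le \beta g k\, u$, so that for $\beta < \beta_c$ with $\beta_c = W(1/(ue))/(2gk)$ the geometric series over the number of steps converges. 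The key point — and this is where the polynomial tail survives rather than collapsing to an exponential — is the standard reproducing-type inequality for power laws: $\sum_{z} (1+d_{i,z})^{-\alpha}(1+d_{z,j})^{-\alpha} \le \const\cdot(1+d_{i,j})^{-\alpha}$ for $\alpha>D$, which lets one resum a chain of $n$ power-law bonds of total displacement $d_{X,Y}$ into a single factor $d_{X,Y}^{-\alpha}$ times a geometric factor $(\beta g k\, c\, u)^{n}$ absorbing the intermediate sums. Summing over $n\ge 1$ and over the $|X|\cdot|Y|$ choices of endpoints $i\in X$, $j\in Y$, and accounting for the $e^{(|X|+|Y|)/k}$-type multiplicity from how $O_X,O_Y$ can be decomposed over the bonds incident to $X$ and $Y$, yields the claimed bound \eqref{theor2}.

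Third, I would package the convergence threshold cleanly: the cluster expansion of $\log Z$ converges absolutely provided the per-site weight $\beta g k\,u$ (up to the $O(1)$ constant from the reproducing inequality) is below the radius of convergence, which by the Kotecký--Preiss / Koteck\'y--Preiss-type criterion amounts to $2\beta g k\, u\, e \le 1/\,$(something $\le 1$), and solving $x e^{x} = 1/u$ gives the Lambert-$W$ form $\beta_c = W(1/(ue))/(2gk)$. The final cosmetic step is the elementary bound $W(1/(ue)) \ge 1/(1+ue) \ge 1/(4u)$ (valid since $u\ge 1$), giving the convenient $\beta_c \ge 1/(8ugk)$ stated in the theorem; this follows from $W(x)\ge x/(1+x)$ for $x\ge 0$ applied at $x=1/(ue)$ together with $ue\ge e>3$.

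\textbf{Main obstacle.} The hard part will be the power-law resummation: unlike the exponential case, a naive bound on a chain of $n$ power-law bonds loses a factor growing with $n$ or with the number of lattice sites at each intermediate step, so one cannot simply bound each bond by its maximum. The technical heart is therefore establishing the reproducing inequality $\sum_z (1+d_{i,z})^{-\alpha}(1+d_{z,j})^{-\alpha}\le \const\,(1+d_{i,j})^{-\alpha}$ uniformly, and — more delicately — iterating it $n$ times so that the accumulated constant is at most $C^{n}$ rather than $n!$ or $(\const)^{n^2}$, so that it can be absorbed into the geometric series controlled by $\beta<\beta_c$. Controlling the combinatorics of which clusters are ``connecting'' (so that the polynomial factor attaches to $d_{X,Y}$ and not to some larger internal scale of the cluster) while keeping track of the $|X|,|Y|$ and $e^{(|X|+|Y|)/k}$ prefactors is the other place where care is needed; everything else is the standard high-temperature cluster-expansion machinery.
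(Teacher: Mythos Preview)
Your overall strategy --- cluster expansion, restriction to clusters connecting $X$ to $Y$, and the reproducing inequality $\sum_z(1+d_{i,z})^{-\alpha}(1+d_{z,j})^{-\alpha}\le\const\,(1+d_{i,j})^{-\alpha}$ to extract the $d_{X,Y}^{-\alpha}$ tail --- is indeed the backbone of the paper's argument (your reproducing inequality is exactly Lemma~\ref{productlemma}). But step~1 has a genuine gap: for a quantum Hamiltonian with $[H,O_Y]\ne0$, the identity $\Cor_{\rho_\beta}(O_X,O_Y)=\partial_\lambda\partial_\mu\log Z[\lambda,\mu]\big|_{0,0}$ with $Z[\lambda,\mu]=\tr\bigl(e^{-\beta H+\lambda O_X+\mu O_Y}\bigr)$ is \emph{false}. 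By Duhamel, $\partial_\mu\tr\bigl(O_X e^{-\beta H+\mu O_Y}\bigr)\big|_{\mu=0}=\int_0^1\tr\bigl(O_X e^{-s\beta H}O_Y e^{-(1-s)\beta H}\bigr)\,ds$, so the mixed log-derivative is the Kubo--Mori (Duhamel) correlator, not $\langle O_XO_Y\rangle-\langle O_X\rangle\langle O_Y\rangle$. Consequently the classical Kotecký--Preiss machinery for $\log Z$ does not deliver the correlator in the theorem. The paper circumvents this by a replica trick: it doubles the Hilbert space and writes $\Cor_{\rho_\beta}(O_X,O_Y)=\mathcal Z^{-2}\tr\bigl(e^{-\beta H^{(+)}}O_X^{(0)}O_Y^{(1)}\bigr)$ with $H^{(+)}=H\otimes\hat1+\hat1\otimes H$ and $O_Y^{(1)}=O_Y\otimes\hat1-\hat1\otimes O_Y$; the antisymmetry of $O_Y^{(1)}$ under swapping the two copies is precisely what kills all non-connecting terms in the Taylor expansion of $e^{-\beta H^{(+)}}$ (Lemma~\ref{Cluster_lemma}), giving the connected-cluster restriction without any generating functional.

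Two further points where the actual mechanism differs from your sketch. First, the factor $e^{(|X|+|Y|)/k}$ does not come from decomposing $O_X,O_Y$ over incident bonds; it comes from the combinatorics of the \emph{non-chain} portion of each connecting cluster. The paper splits every connecting cluster into a shortest chain $w_{\rm cl}^{\rm s}$ of length $s'$ (which, via the iterated reproducing inequality, produces $|X||Y|\,d_{X,Y}^{-\alpha}$) and a residue $w_{\rm cl}^{\rm r}$ of $n=s-s'$ bonds attached anywhere to $L=X\cup Y\cup w_{\rm cl}^{\rm s}$; the attachment count $|L|(|L|+k)\cdots(|L|+(n-1)k)\le(ekn)^n e^{|L|/k}$ is what generates $e^{(|X|+|Y|)/k}$. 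Second, to relate the partial sum over $w_{\rm cl}^{\co}$ back to $\mathcal Z^2$ the paper invokes the Golden--Thompson inequality together with the extensivity bound~\eqref{extensive_cond}, which injects an extra factor $e^{2\beta gks}$ per cluster of size $s$ into the convergence criterion; this is the origin of the condition $2\beta gk\,e\,u\,e^{2\beta gk}<1$ and hence of the Lambert-$W$ form $\beta_c=W(1/(ue))/(2gk)$. Your Kotecký--Preiss outline does not account for this quantum normalization cost.
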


\vspace{.2in}


For the proof of the clustering theorem, we utilize the cluster expansion~\cite{CMIclustering,kuwahara2019gaussian,PhysRevX.4.031019}. 
First of all, we describe the correlation function $\Cor_\rho(O_X,O_Y)$ can be written as
\begin{align}
\Cor_{\rho_{\beta}}(O_X,O_Y) =\frac{1}{{\cal Z}^2} \tr \br{ e^{-\beta H^{(+)}} O_X^{(0)} O_Y^{(1)}  }, 
\end{align}
where we define the operators $O^{(+)}$, $O^{(0)}$ and $O^{(1)}$ as 
\begin{align}
\label{def_copy_op}
O^{(+)} = O \otimes \hat{1} +  \hat{1} \otimes O , \quad O^{(0)} = O \otimes \hat{1} , \quad  O^{(1)} =  O \otimes \hat{1} -  \hat{1} \otimes O \, , 
\end{align}
by considering a copy of the original Hilbert space. The quantity ${\cal Z}$ is a normalization factor ${\cal Z}^2 =\tr (e^{-\beta H^{(+)}})$($=[\tr(e^{-\beta H})]^2$ in the original Hilbert space). It is simple to observe that the following lemma holds:
\begin{lemma} \label{Cluster_lemma}
{\it For an arbitrary set of operators $\{O_{Z_j}\}_{j=1}^m$, the following relation holds: 
	\begin{align}
		\tr \br{ O_{Z_1}^{(+)} O_{Z_2}^{(+)} \cdots O_{Z_m}^{(+)}   O_X^{(0)} O_Y^{(1)}  } =0
		\label{main:eq_Cluster_lemma}
	\end{align}
	when the subsets $\{Z_1,Z_2,\ldots, Z_m\}$ cannot connect the subsets $X$ and $Y$ [see Fig.~\ref{fig:connection_subset} (a)], i.e., 
	\begin{align} 
		\label{decomp_sets}
		\br{X\cup Z_{i_1} \cup Z_{i_2} \cup \cdots \cup Z_{i_s}} \cap \br{Y\cup Z_{i_{s+1}} \cup Z_{i_{s+2}} \cup \cdots \cup Z_{i_m}} = \emptyset 
	\end{align}
	with $\{i_1,i_2, \ldots, i_m\} = \{1,2,\ldots,m\}$.
}
\end{lemma}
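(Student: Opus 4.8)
The plan is to prove Lemma~\ref{Cluster_lemma} by exhibiting an explicit factorization of the trace across the partition that separates $X$ from $Y$. The decomposition hypothesis \eqref{decomp_sets} says that, after relabeling, the index set splits as $\{i_1,\ldots,i_s\}$ and $\{i_{s+1},\ldots,i_m\}$ so that the union $U_X := X \cup Z_{i_1}\cup\cdots\cup Z_{i_s}$ and $U_Y := Y \cup Z_{i_{s+1}}\cup\cdots\cup Z_{i_m}$ are disjoint subsets of $\Lambda$. First I would observe that the operators in the product on the left-hand side of \eqref{main:eq_Cluster_lemma} therefore organize into two commuting groups: every $O_{Z_{i_t}}^{(+)}$ with $t\le s$ together with $O_X^{(0)}$ is supported on the doubled Hilbert space over $U_X$, while every $O_{Z_{i_t}}^{(+)}$ with $t>s$ together with $O_Y^{(1)}$ is supported on the doubled Hilbert space over $U_Y$. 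Since $U_X\cap U_Y=\emptyset$, operators from the two groups act on disjoint tensor factors and commute, so the product can be rearranged (without sign issues because each copied operator $O^{(+)},O^{(0)},O^{(1)}$ is an even-parity combination in the fermionic case, which I would note explicitly) into a product $P_X P_Y$ with $\Supp(P_X)\subseteq U_X$ (doubled) and $\Supp(P_Y)\subseteq U_Y$ (doubled).

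The second step is to compute the trace. Because the full doubled Hilbert space factorizes as $\mathcal{H}_{U_X}^{\otimes 2}\otimes \mathcal{H}_{U_Y}^{\otimes 2}\otimes \mathcal{H}_{\rm rest}^{\otimes 2}$ and $P_X$, $P_Y$ act trivially on the ``rest'' factor, the trace on the left of \eqref{main:eq_Cluster_lemma} factorizes as $\tr(P_X)\cdot \tr(P_Y)\cdot \dim(\mathcal{H}_{\rm rest}^{\otimes 2})$, up to normalization bookkeeping. The key point is then that $\tr(P_Y) = 0$: the factor $P_Y$ contains exactly one occurrence of $O_Y^{(1)} = O_Y\otimes\hat 1 - \hat 1\otimes O_Y$, and every other operator entering $P_Y$ is of the symmetric form $O^{(+)} = O\otimes\hat 1 + \hat 1\otimes O$. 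Under the swap isometry exchanging the two copies of $\mathcal{H}_{U_Y}$, each $O_{Z}^{(+)}$ is invariant while $O_Y^{(1)}$ changes sign; since the trace is invariant under this conjugation, $\tr(P_Y) = -\tr(P_Y)$, forcing $\tr(P_Y)=0$. I would present this swap-symmetry argument as the crux, stating it as a short sub-claim: any operator on $\mathcal{H}^{\otimes 2}$ built as a product of $(+)$-type factors and an odd number of $(1)$-type factors is traceless.

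Concretely, the order of steps is: (i) use \eqref{decomp_sets} to split the operator product into the $U_X$-group and $U_Y$-group and note these commute because $U_X\cap U_Y=\emptyset$; (ii) reorder the product into $P_X P_Y$, tracking that all copied operators have even fermion parity so no anticommutation signs appear; (iii) factorize the trace over the tensor decomposition of the doubled space; (iv) apply the swap-symmetry sub-claim to conclude $\tr(P_Y)=0$, hence the whole trace vanishes. I would also remark that the same argument gives $\tr(P_X O_X^{(0)}\cdots) $ need not vanish — only the $O_Y^{(1)}$-side matters — which is exactly why the asymmetric choice $O_X^{(0)},O_Y^{(1)}$ in the cluster-expansion representation of the correlation function is the right one.

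The main obstacle I anticipate is purely bookkeeping rather than conceptual: making sure that ``supported on $U_X$ (doubled)'' is used correctly, i.e.\ that after the relabeling guaranteed by \eqref{decomp_sets} the operator $O_X^{(0)}$ really does land in the $U_X$-group (it does, since $X\subseteq U_X$) and similarly $O_Y^{(1)}$ in the $U_Y$-group, and that for fermionic systems the three copied combinations in \eqref{def_copy_op} are genuinely even under fermion parity on the doubled space so the commutation-and-reordering step in (i)–(ii) is legitimate without Jordan–Wigner strings crossing between $U_X$ and $U_Y$. Once that is pinned down, the vanishing of the trace is immediate from the $\mathbb{Z}_2$ swap symmetry.
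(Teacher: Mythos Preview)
Your proposal is correct and follows essentially the same route as the paper's proof: split the product into the $U_X$- and $U_Y$-supported factors, factorize the trace over the disjoint supports, and then use the swap symmetry of the doubled Hilbert space (each $O^{(+)}$ is symmetric, $O_Y^{(1)}$ is antisymmetric) to conclude that the $U_Y$-factor is traceless. The paper presents the same argument somewhat more tersely, simply declaring the factorization and invoking the symmetry/antisymmetry under exchange of the two copies; your additional remarks on reordering and on fermion parity are sound refinements of the same idea rather than a different approach.
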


\textit{Proof of Lemma~\ref{Cluster_lemma}.}
Without loss of generality, we let $\{i_1,i_2,\ldots,i_s\}=\{1,2,\ldots, s\}$ and $\{1,2,\ldots,s\}\setminus \{i_1,i_2, \ldots, i_m\} = \{s+1,s+2,\ldots,m\}$. 
We then denote 
\begin{align}
O_{Z_1}^{(+)} O_{Z_2}^{(+)} \cdots O_{Z_s}^{(+)}  = W_1, \quad O_{Z_{s+1}}^{(+)} O_{Z_{s+2}}^{(+)} \cdots O_{Z_m}^{(+)}  = W_2 ,
\end{align}
with
\begin{align}
\Supp(X \cup W_1) \cap \Supp(Y\cup W_2) = \emptyset, 
\end{align}
where the second equation is equivalent to the condition~\eqref{decomp_sets}. 
We then decompose as 
\begin{equation}
\tr \left(O^{(+)}_{Z_1} O^{(+)}_{Z_2}\ldots O^{(+)}_{Z_m} O^{(0)}_X O^{(1)}_Y\right)= \tr \left(W_1 O^{(0)}_X \right)  \tr \left(W_2 O^{(1)}_Y \right).
\end{equation}
Here, the operators $ O^{(+)}_{Z_{i_k}}=O_{Z_{i_k}}\otimes \hat{1}+\hat{1}\otimes O_{Z_{i_k}}$ are symmetric for the exchange of the two Hilbert space while the operator $O^{(1)}_Y=O_Y\otimes \hat{1}-\hat{1}\otimes O_Y$ is antisymmetric, which yields $\tr \left(W_2 O^{(1)}_Y \right)=0$. 
We thus prove the main equation~\eqref{main:eq_Cluster_lemma}.
$\square$

{~}

\hrulefill{\bf [ End of Proof of Lemma~\ref{Cluster_lemma}]}

{~}

\begin{figure}[t]
	\begin{tabular}{c}
		\begin{minipage}{0.33\hsize}
			\begin{center}
				\includegraphics[width= \textwidth]{./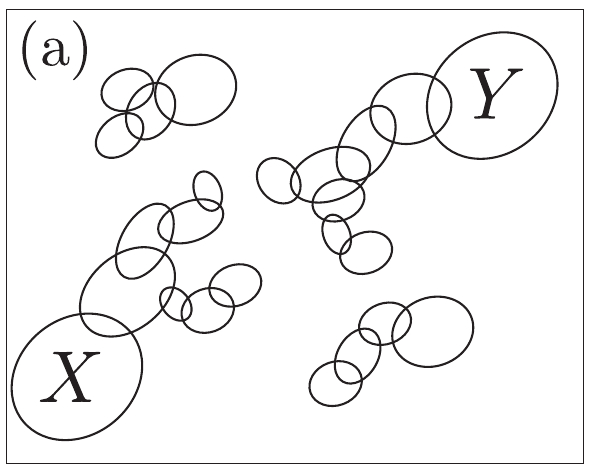}
			\end{center}
		\end{minipage}
		
		\begin{minipage}{0.33\hsize}
			\begin{center}
				\includegraphics[width= \textwidth]{./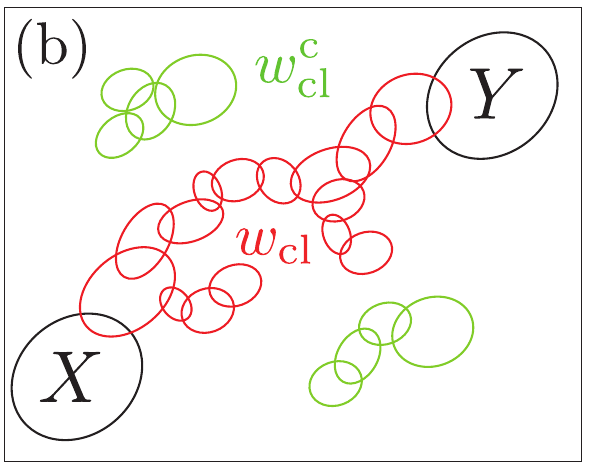}
			\end{center}
		\end{minipage}
		\begin{minipage}{0.33\hsize}
			\begin{center}
				\includegraphics[width= \textwidth]{./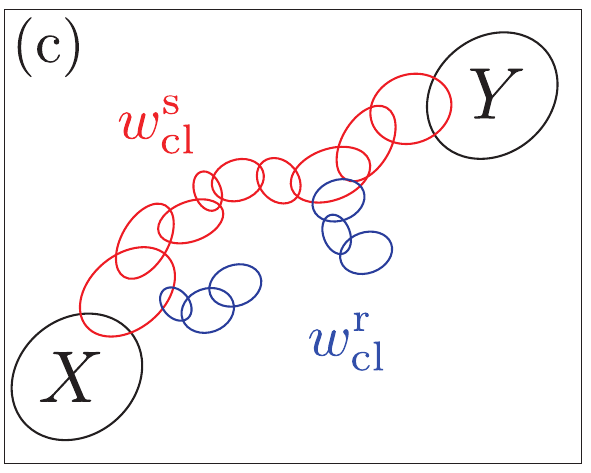}
			\end{center}
		\end{minipage}
	\end{tabular}
	\caption{(a): Schematic picture of unconnected subsets. (b): Connected subsets. (c): Decomposition of the shortest subsets and the rest of the subsets. }
	\label{fig:connection_subset}
\end{figure}

By using Lemma~\ref{Cluster_lemma}, the function $\Cor_{\rho_{\beta}}(O_X,O_Y)$ can be expanded as 
\begin{align}
	\Cor_{\rho_{\beta}}(O_X,O_Y) =\frac{1}{{\cal Z}^2}\sum_{m=0}^\infty \sum_{Z_1,Z_2,\ldots,Z_m: {\rm connected}} \frac{(-\beta )^m}{m!} \tr \br{h_{Z_1}^{(+)} h_{Z_2}^{(+)} \cdots h_{Z_m}^{(+)}O_X^{(0)} O_Y^{(1)}  },
\end{align}
where the summation $\sum_{Z_1, Z_2,\ldots, Z_m: {\rm connected}}$ is taken over the collections $\{X, Z_1,Z_2,\ldots, Z_m,Y\}$ that connect $X$ and $Y$. See Fig.~\ref{fig:connection_subset} (b). Note that $\{Z_s\}_{s=1}^m$ can be the same, e.g., $Z_1=Z_2$. Let us define the operator $\rho_{\rm cl}$ as 
\begin{align}
	\label{correlation_expansion}
	\rho_{\cl}=\frac{1}{{\cal Z}^2} \sum_{m=0}^\infty \sum_{Z_1,Z_2,\ldots,Z_m: {\rm connected}} \frac{(-\beta )^m}{m!} h_{Z_1}^{(+)} h_{Z_2}^{(+)} \cdots h_{Z_m}^{(+)}. 
\end{align}
Using the operator $\rho_{\cl}$, the correlation ${\rm Cor}_{\rho_{\beta}} (O_X, O_Y)$ has the following upper bound: 
\begin{align}
	\label{correlation_up_cluster}
	\left|\Cor_{\rho_{\beta}} (O_X,O_Y) \right| \le 2 \norm{O_X} \cdot \norm{O_Y} \cdot \norm{\rho_{\cl}}_1
\end{align}

To count the summation of $\sum_{Z_1,Z_2,\ldots,Z_m: {\rm connected}}$, we first decompose as 
\begin{align}
	\{ Z_1,Z_2,\ldots,Z_m\} = w_{\rm cl} \oplus w_{\rm cl}^\co,\quad w_{\rm cl}^\co:= \{ Z_1,Z_2,\ldots,Z_m\}\setminus w_{\rm cl},
\end{align}
where $w_{\rm cl}$ is taken such that $w_{\rm cl} \oplus \{X,Y\}$ are connected to each other (See Fig.~\ref{fig:connection_subset} (b)), in other words, 
\begin{align}
	Z \cap Z' = \emptyset \for Z \in w_{\rm cl}  ,\quad Z' \in w^\co_{\rm cl}.
\end{align}
We define all the sets of $w_{\rm cl}$ and $w_{\rm cl}^{\rm c}$ as $\mathcal{G}_{\rm cl}$ and $\mathcal{G}^{\rm c}_{\rm cl}$, respectively. We also define the total support of the $w_{\rm cl}$ and $w_{\rm cl}^{\rm c}$ as $V_{w_{\rm cl}}$ and $V_{w_{\rm cl}^{\rm c}}$, respectively. We then obtain 
\begin{align}
	\rho_{\rm cl} 
	&= {1\over {\cal Z}^2} \sum_{m=0}^{\infty }{(-\beta )^m \over m! }\sum_{s=0}^{m} {m! \over (m-s)! s! } \sum_{\{ Z_1 , \cdots , Z_s\} = w_{\rm cl} \in \mathcal{G}_{\rm cl}}\!\!\!\! \tilde{h}(Z_1, \cdots , Z_s) \sum_{w_{\rm cl}^{\rm c} \in \mathcal{G}^{\rm c}_{\rm cl}}\sum_{Z_{s+1}, \cdots , Z_m  \atop \in w_{\rm cl}^{\rm c}}\!\!\!\!\!\! h^{(+)} (Z_{s+1}) h^{(+)}(Z_{s+2})\cdots  h^{(+)}(Z_{m}) \nonumber \\
	&= {1\over {\cal Z}^2} \sum_{s=0}^{\infty }{(-\beta )^s \over s! } \sum_{w_{\rm cl} \in \mathcal{G}_{\rm cl}} \tilde{h}(Z_1, \cdots , Z_s) e^{-\beta H^{(+)}_{V_{w^\co_{\rm cl}}}} .\label{decomp_cl_and_cl-co}
\end{align}
Here, we have defined 
\begin{align}
	\tilde{h}(Z_1 , \cdots , Z_s) & =\sum_{P} h^{(+)} (Z_{P_1})\cdots  h^{(+)}(Z_{P_s}) \, , 
\end{align} 
where $P$ implies taking all combinations of different elements for $Z$. In particular, when $Z$ are all different, it means taking a permutation.
From the expression (\ref{decomp_cl_and_cl-co}), we obtain
\begin{align}
	\| {\rho_{\cl}}\|_1 &\le 
	{1\over {\cal Z}^2} \sum_{s=0}^{\infty }{\beta^s \over s! } \sum_{ w_{\rm cl} \in \mathcal{G}_{\rm cl}} \| \tilde{h}(Z_1, \cdots , Z_s) \| \,  \tr \Bigl[ e^{-\beta H^{(+)}_{V_{w^\co_{\rm cl}}}} \Bigr] 
	\le 
	\sum_{s=0}^{\infty }{\beta^s \over s! } \sum_{ w_{\rm cl} \in \mathcal{G}_{\rm cl}} e^{2\beta g u| V_{w_{\rm cl}}|} \| \tilde{h}(Z_1, \cdots , Z_s) \| \, . 
\end{align}
Here, we use the Golden-Thompson inequality as follows:
\begin{align}
	\tr\Bigl[ e^{-\beta H^{(+)}_{V_{w^\co_{\rm cl}}}} \Bigr]&= \tr \Bigl[ e^{-\beta \br{ H^{(+)} + H^{(+)}_{V_{w^\co_{\rm cl}}}- H^{(+)}}} \Bigr]
	\le  \tr \Bigl[ e^{-\beta H^{(+)}  }  e^{-\beta  \br{ H^{(+)}_{V_{w^\co_{\rm cl}}}-H^{(+)}}} \Bigr] \le {\cal Z}^2  e^{2\beta \|  H_{V^\co_{w_{\rm cl}}}-H \| }  \nonumber \\
	&\le {\cal Z}^2  e^{2\beta g u| V_{w_{\rm cl}}|}\le {\cal Z}^2  e^{2\beta gu ks}
\end{align}
where we use the condition~\eqref{extensive_cond} in the last inequality. 
From the $k$-locality of the Hamiltonian, we have $|V_{w_{\rm cl}}| \le k| w_{\rm cl} |= k s $. 

We divide each element in the set $w_{\rm cl}$ into the two subsets $w_{\rm cl}^{\rm s}$ and $w_{\rm cl}^{\rm r}$, where $w_{\rm cl}^{\rm s}$  is a set representing the shortest path with the length $s'$ and $w_{\rm cl}^{\rm r}$ is a set for the rest part with the length $(s-s')$. See Fig.~\ref{fig:connection_subset} (c). We thus obtain 
\begin{align}
	\| \rho_{\cl} \|_1& \le \sum_{s=0}^{\infty} { (\beta e^{2\beta g u k })^s\over s! } \sum_{s'=1}^s {s! \over s' ! (s-s')!} \sum_{\{Z_1, \cdots , Z_{s'}\} =w_{\rm cl}^{\rm s}} \| \tilde{h} (Z_1, \cdots ,Z_{s'})\|' \sum_{\{Z_{s'+1}, \cdots , Z_{s}\} =w_{\rm cl}^{\rm r}} \| \tilde{h} (Z_{s'+1}, \cdots ,Z_{s})\|'  \nonumber \\
	& = \sum_{s=0}^{\infty} (\beta e^{2\beta g u k })^s\sum_{s'=1}^s
	\sum_{w_{\rm cl}^{\rm s}} { \| \tilde{h} (Z_1, \cdots ,Z_{s'})\|'  \over  s'! }
	\sum_{w_{\rm cl}^{\rm r}} { \| \tilde{h} (Z_{s'+1}, \cdots ,Z_{s})\|' \over (s -s')! }\, , \label{rhocl1wclsr}
	%
\end{align}
where 
\begin{align}
	\| \tilde{h} (Z_1, \cdots ,Z_{s'})\|' &: =\sum_{P} \| h_{Z_{P_1}}^{(+)}\| \cdots \| h_{Z_{P_{s'}}}^{(+)}\| \le
	2^{s'} \sum_{P} \| h_{Z_{P_1}} \| \cdots \| h_{Z_{P_{s'}}}\|  \, . \label{1sp}
\end{align}
Similarly to (\ref{1sp}), $ \| \tilde{h} (Z_{s'+1}, \cdots ,Z_{s})\|'$ is defined. Below, we evaluate the terms on $w_{\rm cl}^{\rm s}$ and $w_{\rm cl}^{\rm r}$ in (\ref{rhocl1wclsr}). To this end, the following lemma is useful. 
\begin{lemma} \label{productlemma}
	{\it For $\alpha >D$, 
		the quantity $J_{i,j}$ defined in (\ref{def_short_range_long_range}) satisfies the following inequality
		\begin{align}
			\sum_{j \in \Lambda} J_{i,j} J_{j,k} \le {g^2 u \over (1+d_{i,k})^{\alpha}} \, , \label{twice}
		\end{align}
		where $u=2^{\alpha} \sum_{j \in \Lambda} (1+d_{i,j})^{-\alpha}$. The iterative use of (\ref{twice}) leads to $\left[ {\bm J}^{\ell}\right]_{i,k} \le {g^{\ell}u^{\ell -1} / (1+ d_{i,k})^{\alpha}}$.
	}
\end{lemma}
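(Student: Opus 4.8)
The plan is to strip away the Hamiltonian entirely and reduce Lemma~\ref{productlemma} to a purely geometric convolution estimate on the lattice, then bootstrap to arbitrary powers by induction. By the interaction condition~\eqref{def_short_range_long_range} we have $J_{i,j}\le g/(1+d_{i,j})^{\alpha}$, so~\eqref{twice} follows once one establishes
\[
S(i,k):=\sum_{j\in\Lambda}\frac{1}{(1+d_{i,j})^{\alpha}}\,\frac{1}{(1+d_{j,k})^{\alpha}}\le\frac{u}{(1+d_{i,k})^{\alpha}},
\]
where $u=2^{\alpha}\sum_{j\in\Lambda}(1+d_{i,j})^{-\alpha}$ is finite precisely because $\alpha>D$ (the number of sites at lattice distance $r$ from a fixed site grows as $\orderof{r^{D-1}}$), and, crucially, this single-site sum is independent of the system size $N$.

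For the geometric estimate the only input is the triangle inequality: $d_{i,k}\le d_{i,j}+d_{j,k}\le 2\max(d_{i,j},d_{j,k})$, hence $1+d_{i,k}\le 2\max(1+d_{i,j},1+d_{j,k})$ for every $j$. I would split $\Lambda=S_i\sqcup S_k$ with $S_i=\{j:d_{i,j}\le d_{j,k}\}$ and $S_k=\Lambda\setminus S_i$. On $S_i$ the larger distance is $d_{j,k}$, so $(1+d_{j,k})^{-\alpha}\le 2^{\alpha}(1+d_{i,k})^{-\alpha}$, giving
\[
\sum_{j\in S_i}\frac{1}{(1+d_{i,j})^{\alpha}(1+d_{j,k})^{\alpha}}\le\frac{2^{\alpha}}{(1+d_{i,k})^{\alpha}}\sum_{j\in\Lambda}\frac{1}{(1+d_{i,j})^{\alpha}},
\]
and symmetrically on $S_k$ one pulls out $(1+d_{i,j})^{-\alpha}\le 2^{\alpha}(1+d_{i,k})^{-\alpha}$ and is left with $\sum_{j\in\Lambda}(1+d_{j,k})^{-\alpha}$, which equals (or is bounded by the same $N$-independent constant as) $\sum_{j\in\Lambda}(1+d_{i,j})^{-\alpha}$. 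Adding the two contributions yields $S(i,k)\le 2\cdot 2^{\alpha}(1+d_{i,k})^{-\alpha}\sum_{j}(1+d_{i,j})^{-\alpha}$, i.e.\ the claim with $u$ of the stated form (the harmless factor $2$ can be absorbed into the $\Oorderof(1)$ constant or into the definition of $u$).

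The iterated bound then follows by induction on $\ell$. Writing $[{\bm J}^{\ell}]_{i,k}=\sum_{j}[{\bm J}^{\ell-1}]_{i,j}J_{j,k}$, the inductive hypothesis $[{\bm J}^{\ell-1}]_{i,j}\le g^{\ell-1}u^{\ell-2}(1+d_{i,j})^{-\alpha}$ together with $J_{j,k}\le g(1+d_{j,k})^{-\alpha}$ turns the inductive step into exactly the sum $S(i,k)$ above, contributing one further factor $u/(1+d_{i,k})^{\alpha}$; this gives $[{\bm J}^{\ell}]_{i,k}\le g^{\ell}u^{\ell-1}/(1+d_{i,k})^{\alpha}$ and closes the induction.

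The only genuinely delicate point is the ``midpoint'' region of $j$ roughly equidistant from $i$ and $k$: there neither of the two factors is individually small, and a naive bound would lose the overall decay $(1+d_{i,k})^{-\alpha}$. The triangle-inequality split is precisely the device that tames it, trading a constant $2^{\alpha}$ for the desired decay; everything else is bookkeeping of constants. It is worth emphasizing that the hypothesis $\alpha>D$ enters only through convergence of the single-site sum $u$ uniformly in $N$, which is exactly what makes the threshold temperature in Theorem~\ref{main_thm_1} a finite, size-independent quantity.
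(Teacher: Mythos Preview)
Your argument is correct and follows essentially the same route as the paper: reduce to the lattice convolution $\sum_j (1+d_{i,j})^{-\alpha}(1+d_{j,k})^{-\alpha}$, use the triangle inequality to extract the overall decay $(1+d_{i,k})^{-\alpha}$, and induct for the matrix powers. The only cosmetic difference is that the paper replaces your max-splitting by the convexity bound $(x+y)^{\alpha}\le 2^{\alpha-1}(x^{\alpha}+y^{\alpha})$ applied to $x=1+d_{i,j}$, $y=1+d_{j,k}$, which yields the stated constant $u=2^{\alpha}\sum_{j}(1+d_{i,j})^{-\alpha}$ exactly rather than with your extra factor of~$2$.
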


\textit{Proof of Lemma~\ref{productlemma}.}
\begin{align}
	\sum_{j \in \Lambda }J_{i,j} J_{j,k} &=g^2 \sum_{j \in \Lambda }{1\over (1+d_{i,j})^\alpha }{1\over (1+d_{j,k})^\alpha } \le {g^2 \over (1+ d_{i,k})^{\alpha}} \sum_{j \in \Lambda }{ (2 + d_{i,j} + d_{j,k})^\alpha \over (1+d_{i,j})^\alpha }{1\over (1+d_{j,k})^\alpha } .
\end{align}
Note $(x+y)^{\alpha} \le 2^{\alpha -1} (x^{\alpha} + y^{\alpha})$ for $\alpha >1$. Then we have 
\begin{align}
	\sum_{j \in \Lambda } J_{i,j} J_{j,k} &\le {g^2 2^{\alpha -1} \over (1+ d_{i,k})^{\alpha}}
	\sum_{j \in \Lambda } {1 \over (1+d_{i,j})^\alpha } + {1 \over (1+d_{j,k})^\alpha }
	= {g^2 u \over (1+ d_{i,k})^{\alpha}} .
\end{align} 
The finiteness of $u$ is guaranteed by the condition $\alpha >D$. 

{~}

\hrulefill{\bf [ End of Proof of Lemma~\ref{productlemma}]}

{~}

Let us first consider the terms on $w_{\rm cl}^{\rm s}$ in (\ref{rhocl1wclsr}).
\begin{align}
	\sum_{w_{\rm cl}^{\rm s}} { \| \tilde{h} (Z_1, \cdots ,Z_{s'})\|'  \over  s'! }
	&\le {2^{s'}} \sum_{i_1 \in X}\sum_{i_2 \in \Lambda } \sum_{Z_1 \ni \{ i_1, i_2\}}\sum_{i_3 \in \Lambda }\sum_{Z_2 \ni \{ i_2 , i_3\}} \cdots \sum_{i_{s'} \in \Lambda } \sum_{i_{s' +1} \in Y} \sum_{Z_{s'}\ni \{i_{s'}, i_{s' +1}\}}  \| h_{Z_{1}} \| \cdots \| h_{Z_{s'}}\| \nonumber \\
	&={2^{s'}} \sum_{i_1 \in X}\sum_{i_2 \in \Lambda } \cdots  \sum_{i_{s'} \in \Lambda } \sum_{i_{s' +1} \in Y} J_{i_1,i_2}\cdots J_{i_{s'},i_{s'+1}}
	\le {u^{-1} (2 g u )^{s'}}|X| \,|Y| {1 \over (1 + d_{X,Y})^{\alpha }} \, .
\end{align}

\begin{figure}[t]
	\begin{tabular}{c}
			\includegraphics[width= 0.55\textwidth]{./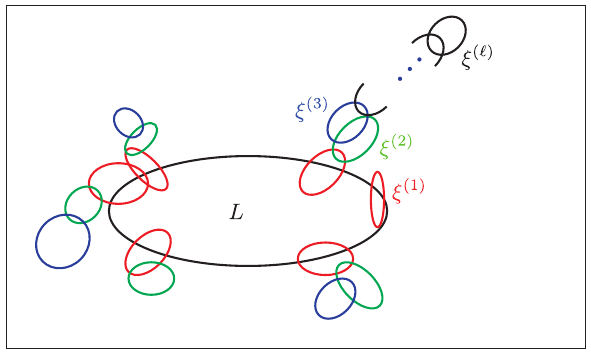}
	\end{tabular}
	\caption{Schematic picture of the layered structure}
	\label{fig:layer}
\end{figure}
We next consider the terms of $w_{\rm cl}^{\rm r}$ in (\ref{rhocl1wclsr}). We define $n:=s-s'$ and $L:=X \cup Y \cup \omega_{\rm cl}^{\rm s} $ for a simplicity. In addition, let us denote $Z_{s+m}$ by $\xi_{m}\,(m=1,\cdots , n)$ for notational simplicity. For a given set $\xi_L:=\{\xi_m \}_{m=1}^n$, we decompose it as follows.
\begin{align}
	\xi_L &= \xi^{(1)}\oplus\xi^{(2)}\oplus \cdots \oplus \xi^{(\ell)} \, , ~~~~~ 1 \le \ell \le n \, 
\end{align}
where the set $\xi^{(j)}$ has the overlap $\xi^{(j-1)}$ with setting $\xi^{(0)}:=L$. Hence, $\xi^{(j)}$ is a achievable set from $L$ by the $j$-step connection. See the schematic in Fig.\ref{fig:layer}. Let $n_j=|\xi^{(j)} |$ where $n_j \ge 1$. We take account of all possible combination of this structure:
\begin{align}  
	\sum_{w_{\rm cl}^{\rm r}} { \|\tilde{h} (\xi_1, \cdots , \xi_n) \| \over n!}
	&={2^n \over n!} \sum_{(\xi_1 , \cdots , \xi_n)} \| h_{\xi_1} \| \cdots\| h_{\xi_n} \| \nonumber \\
	&={2^n \over n!}\sum_{\ell=1}^{n} \sum_{n_1 +\cdots + n_{\ell}=n , n_i \ge 1} {n! \over n_1 ! \cdots n_{\ell}!} \nonumber \\
	&\times \left( \sum_{\xi^{(1)}_i \in \xi^{(1)}|L} \| h_{\xi^{(1)}_i}\| \right)^{n_1} \left( \sum_{\xi^{(2)}_i \in \xi^{(2)}|\{L, \xi^{(1)}\}} \| h_{\xi^{(2)}_i}\| \right)^{n_2} \cdots \left( \sum_{\xi^{(\ell)}_i \in \xi^{(\ell)}|\{L, \xi^{(1)},\cdots , \xi^{(\ell -1)}\}} \| h_{\xi^{(\ell)}_i}\| \right)^{n_{\ell}} \, . \label{wclr}
\end{align}
Here, $\xi^{(j)}|\{L, \xi^{(1)},\cdots , \xi^{(j-1)}\} $ implies taking a set $\xi^{(j)}$ for fixed set $\{L, \xi^{(1)},\cdots , \xi^{(j-1)}\}$. We first upper-bound the most right term in (\ref{wclr}):
\begin{align}
	\sum_{\xi^{(\ell)}_i \in \xi^{(\ell)}|\{L, \xi^{(1)},\cdots , \xi^{(\ell -1)}\}} \| h_{\xi^{(\ell)}_i}\| 
	&\le \sum_{j \in \{ \xi^{(\ell -1) }\}} \sum_{\xi^{(\ell)}_i \ni j } \| h_{\xi^{(\ell)}_i}\| \le (k n_{\ell -1}) g u \, , 
\end{align}
where we use $|\xi^{(\ell -1)}|=n_{\ell -1}$ and $\sum_{\xi \ni i} \| h_{\xi}\| \le g u $. Iterative use of this computation down to the summation over $\xi_i^{(1)}$, we obtain the following bound 
\begin{align}
	\sum_{w_{\rm cl}^{\rm r}} { \|\tilde{h} (\xi_1, \cdots , \xi_n) \| \over n!}
	&\le {(2 k gu)^n }\sum_{\ell=1}^{n} \sum_{n_1 +\cdots + n_{\ell}=n \atop  n_i \ge 1} {1 \over n_1 ! \cdots n_{\ell}!}(|L|/k)^{n_1} n_1^{n_2} \cdots n_{\ell -1}^{n_{\ell}} \, . 
\end{align}
We here use the relation $n_j! \ge (n_j/e)^{n_j}$ and $(1 + y/x)^x \le e^{y}$ for $x>0$ and $y>0$ to get
\begin{align}
	\sum_{w_{\rm cl}^{\rm r}} { \|\tilde{h} (\xi_1, \cdots , \xi_n) \| \over n!} &\le {(2 k gu )^n } \sum_{\ell=1}^{n} \sum_{n_1 +\cdots + n_{\ell}=n \atop  n_i \ge 1} e^{n_1} (|L|/k/n_1)^{n_1}e^{n_2}(n_1/n_2)^{n_2}\cdots e^{n_{\ell}} (n_{\ell-1}/n_{\ell})^{n_\ell} \nonumber \\
	&\le {(2 k gu e)^n } \sum_{\ell=1}^{n} \sum_{n_1 +\cdots + n_{\ell}=n \atop  n_i \ge 1}  (1+|L|/k/n_1)^{n_1}(1+ n_1/n_2)^{n_2}\cdots (1+ n_{\ell-1}/n_{\ell})^{n_\ell} \nonumber \\
	&\le {(2 k gu e^2)^n } e^{|L|/k} \sum_{\ell=1}^{n} \sum_{n_1 +\cdots + n_{\ell}=n \atop  n_i \ge 1}   \nonumber \\
	&= {1\over 2} {(4 k gu e^2)^n } e^{|L|/k} \nonumber \\
	&\le {1\over 2}(4 k gu e^2)^n e^{(|X|+|Y| + s' k)/k} , 
\end{align}
where at the last line we have used the relation $|L| < |X|+ |Y|+ s' k$. At the third line, we use the relation,
$ \sum_{\ell=1}^{n} \sum_{n_1 +\cdots + n_{\ell}=n \atop n_i \ge 1}=2^{n-1}$. Finally we sum over $s'$ to get the following bound.
\begin{align}
	\|\rho_{\rm cl} \|_1 &\le {u^{-1}\over 2} |X|\, |Y| {e^{(|X| + |Y|)/k}\over (1 + d_{X,Y})^{\alpha}} \sum_{s=0}^{\infty} (4\beta g u k e^2 \, e^{2\beta g u k} )^s \sum_{s'=1}^s {(1/2ke)^{s'} } \nonumber \\
	&\le  u^{-1} |X|\, |Y| {e^{(|X| + |Y|)/k}\over (1 + d_{X,Y})^{\alpha}} {1\over 1- 4\beta g u k e^2 \, e^{2\beta g u k } } \, . 
\end{align}
This completes the proof of the Theorem \ref{main_thm_1}.


\section{Power-Law Fitting of 1D Bilinear Fermion Systems}
To clarify the increasing trend of the mutual information, we employ power-law fitting to the 1D long-range bilinear fermion systems characterized by the following Hamiltonian
\begin{align}
	H = - \sum_{\substack{i,j = 1 \\ i \neq j}}^{N} \frac{t_{i,j}}{\dist_{i,j}^{\alpha}} (c_{i}^{\dagger} c_{j} + c_{j}^{\dagger} c_{i}). \label{Model_1_supp}
\end{align}
Here, $t_{i,j}$ denotes the hopping parameter of order $\mathcal{O}(1)$, $N$ represents the system size, and $c_{i}$ and $c_{i}^{\dagger}$ correspond to the annihilation and creation operators, respectively, for a spinless fermion at site $i$.
We numerically calculate the mutual information $\mathcal{I}_{\rho_{\beta}}(A:B)$ at $\beta = 2$ between subsystems $A$ and $B$, where $A$ denotes the first $N / 2$ sites and $B$ the remaining $N / 2$ sites.
We vary the system size $N$ from 100 to 1000 in increments of 100 and examine the power $\alpha$ in Eq.~\eqref{Model_1_supp} from 0.5 to 1.4 in increments of 0.1.

Fig.~\ref{fig:exp}(a) illustrates the average mutual information computed over 20,000 random values of $t_{i,j}$ sampled from the interval $[0,1]$. The results reveal that the mutual information increases following the behavior of $N^{2 - 2\alpha}$ for $\alpha < 1$, while saturating when $\alpha > 1$.

At $\alpha = 1$, the numerical data indicates a slight upward trend. 
To analyze this behavior in greater detail, we plot the average mutual information as a function of $\ln N$, computed from 100,000 samples (see Fig.~\ref{fig:exp}(b)). 
The analysis shows that at $\alpha = 1$, the mutual information exhibits a logarithmic growth with respect to $N$.
A linear regression, performed using MATLAB's \texttt{polyfit} function, confirms that the data (circles in Fig.~\ref{fig:exp}(b)) closely follows a linear dependence on $\ln N$, described by $\mathcal{I}_{\rho_{\beta}}(A:B) = 0.0575 \ln N + 0.2082$ (red line in Fig.~\ref{fig:exp}(b)). 
It is worth noting that Theorem 1 affirms the validity of the thermal area law for $\alpha > 1$, thereby excluding $\alpha = 1$ from this regime.

To further clarify the behavior for $\alpha < 1$ and $\alpha > 1$, we fit the mutual information data from Fig.~\ref{fig:exp}(a) to a power-law form, $\mathcal{I}_{\rho_{\beta}}(A:B) \approx c N^{\Delta}$, and extract the coefficient $c$ and exponent $\Delta$ via numerical fitting. 
In Fig.~\ref{fig:exp}(c), we plot the resulting exponent $\Delta$ as a function of the interaction power $\alpha$ in Eq.~\eqref{Model_1_supp}. 
The power-law fitting was performed using MATLAB's \texttt{polyfit} function, applied to the mutual information data for $N$ in the ranges $100 \leq N \leq 500$ (circles in Fig.~\ref{fig:exp}(c)) and $600 \leq N \leq 1000$ (diamonds in Fig.~\ref{fig:exp}(c)). 
For $\alpha < 1$, the extracted exponent $\Delta$ closely follows the linear relation $\Delta = 2 - 2\alpha$, suggesting that the mutual information scales as $N^{2 - 2\alpha}$. 
In contrast, for $\alpha > 1$, the exponent $\Delta$ assumes a small value consistent with the thermal area law. 
Due to system size limitations, the observed exponent is not exactly zero, particularly when $\alpha$ is close to 1. 
However, it can be inferred that the exponent approaches zero as the system size increases because Theorems 1 and 2 guarantee the thermal area law in the 1D case. 
Supporting this inference, the data for the range $600 \leq N \leq 1000$ more closely aligns with the behavior $\Delta = 2 - 2\alpha$ for $\alpha < 1$ and $\Delta = 0$ for $\alpha > 1$, compared to the data for $100 \leq N \leq 500$. 
This suggests that the behavior becomes more evident as the system size increases. 
This supports the conclusion that the transition point occurs at $\alpha = 1$ and that the thermal area law is valid for $\alpha > 1$.

\begin{figure}[h]
	\centering
	\includegraphics[width=\textwidth]{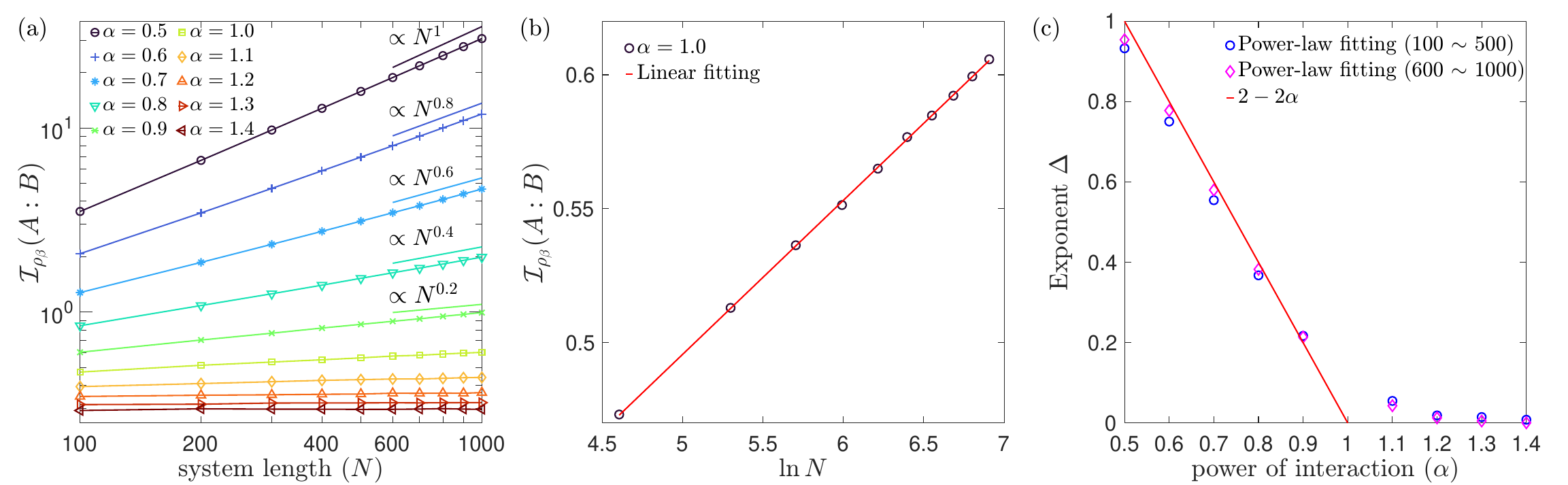}
	\caption{(a) The mutual information $\mathcal{I}_{\rho_{\beta}}(A:B)$ between subsystem $A$ (the first $N/2$ sites) and subsystem $B$ (the remaining $N/2$ sites) is computed. The system size $N$ varies from 100 to 1000 in increments of 100, while the interaction power $\alpha$ varies from 0.5 to 1.4 in increments of 0.1. The data represents the average mutual information over 20,000 samples, with the hopping parameters $t_{i,j}$ randomly selected from the interval $[0,1]$. For $\alpha < 1$, a line proportional to $N^{2 - 2\alpha}$ is included in the figure for comparison.
		(b) The average mutual information at $\alpha = 1.0$, calculated from 100,000 samples, is plotted as circles with the $x$-axis representing the natural logarithm of the system size, $\ln N$. The red line corresponds to the linear fit, given by $\mathcal{I}_{\rho_{\beta}}(A:B) = 0.0575 \ln N + 0.2082$.  
		(c) The circles (diamonds) represent the exponents $\Delta$, obtained by fitting the average mutual information from panel (a) to a power-law, $\mathcal{I}_{\rho_{\beta}}(A:B) \approx c N^{\Delta}$, over the ranges $N = 100$ to $N = 500$ (resp. $N = 600$ to $N = 1000$).
		The red line represents $\Delta = 2 - 2\alpha$. 
		For $\alpha < 1$, the exponent $\Delta$ closely follows the line $2 - 2\alpha$, whereas for $\alpha > 1$, $\Delta$ stays near zero.
		This behavior becomes more pronounced as the system size increases.
	}
	\label{fig:exp}
\end{figure}

\section{Exact Diagonalization Results for Long-Range Heisenberg Chains}
We now consider a general 1D long-range Heisenberg chain described by the Hamiltonian
\begin{align}
	H = \sum_{1 \leq i<j \leq N} \frac{a_{i,j}}{\dist_{i,j}^{\alpha}} {\bm S}_{i} \cdot {\bm S}_{j}, \label{Model_3}
\end{align}
where $a_{i,j}$ is a parameter of order $\orderof{1}$, $N$ is the number of sites, and ${\bm S}_{i} = (S_{i}^{x},S_{i}^{y},S_{i}^{z})$ is a spin-1/2 operator at site $i$.
To support our argument that the thermal area holds for $\alpha > (D+1) / 2$ and this criterion is optimal, we compute the mutual information $\mathcal{I}_{\rho_{\beta}}(A:B)$ between subsystems $A$ and $B$ at $\beta = 2$, with $A$ representing the first $N/2$ sites and $B$ the remaining $N/2$ sites, by using the exact diagonalization method. 
We randomly assign $a_{i,j}$ from the interval $[0,1]$, and then compute the average over 10000 samples. 
Fig.~\ref{fig:ED} shows that the mutual information increases with the system size when $\alpha < 1$, while it does not exhibit an increase for $\alpha > 1$.

\begin{figure}[h]
	\centering
	\includegraphics[width=0.45\textwidth]{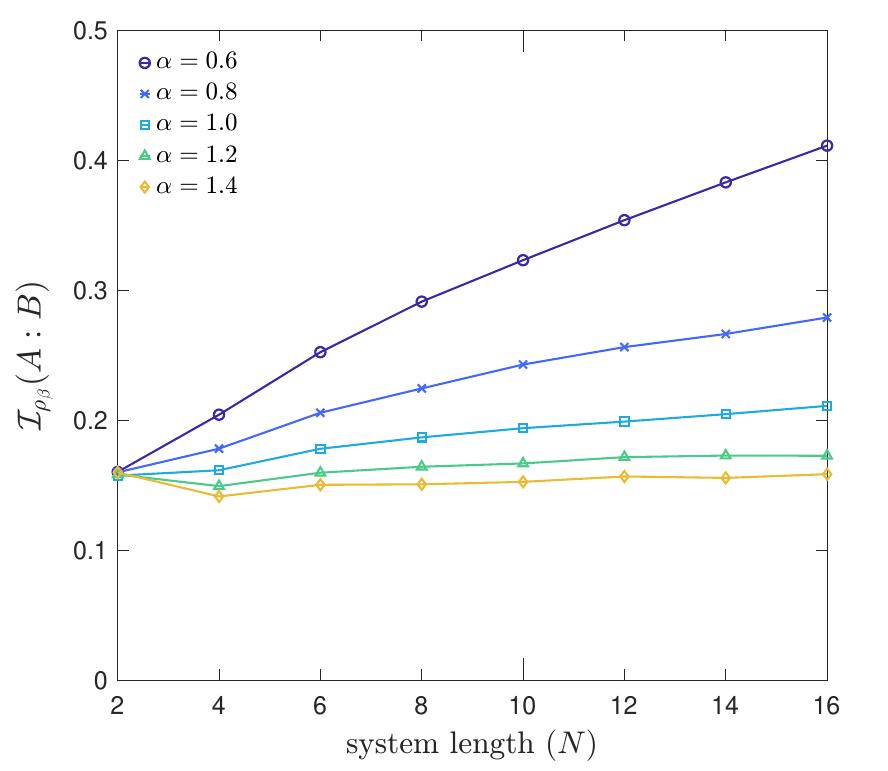}
	\caption{Exact diagonalization results: the averaged mutual information $\mathcal{I}_{\rho_{\beta}}(A:B)$ over 10000 samples of the long-range Heisenberg chain~\eqref{Model_3} with random numbers $a_{i,j}$ assigned from the interval $[0,1]$. Here $N$ is the number of sites, $\beta = 2$, and $A$ and $B$ are the first $N/2$ sites and the remaining $N/2$ sites, respectively.}
	\label{fig:ED}
\end{figure}

\section{Numerical Investigation of Clustering Properties in 2D Long-Range Interacting Spin Systems}
In this section, we present numerical analyses of the clustering properties of correlation functions in two-dimensional long-range interacting spin systems. These results extend the findings for bilinear fermion systems discussed in the main text, providing further evidence that clustering properties universally hold even in nonintegrable systems. In particular, we show that both the long-range Heisenberg model and the long-range XX model exhibit clustering properties consistently in the regime $\alpha < D$.

We consider a rectangular lattice of dimensions $L_x \times L_y$, where the lattice sites are indexed by integer coordinates $\boldsymbol{r} = (x, y)$ with $1 \leq x \leq L_x$ and $1 \leq y \leq L_y$. The distance between two lattice sites $\boldsymbol{r}_{1} = (x_{1}, y_{1})$ and $\boldsymbol{r}_{2} = (x_{2}, y_{2})$ is given by $|\boldsymbol{r}_{1} - \boldsymbol{r}_{2}| = \sqrt{(x_{1} - x_{2})^{2} + (y_{1} - y_{2})^{2}}$.

\subsection{2D Long-Range Heisenberg Model}
We study the 2D long-range Heisenberg model characterized by the Hamiltonian:  
\begin{align}  
	H_{\mathrm{Hei}} = \sum_{\substack{\boldsymbol{r}_{1}, \boldsymbol{r}_{2} \\ \boldsymbol{r}_{1} \neq \boldsymbol{r}_{2}}}  
	\frac{1}{2 |\boldsymbol{r}_{1}-\boldsymbol{r}_{2}|^{\alpha}} \boldsymbol{S}_{\boldsymbol{r}_{1}} \cdot \boldsymbol{S}_{\boldsymbol{r}_{2}},
\end{align}  
where $\boldsymbol{S}_{\mathbf{r}} = (S_{\mathbf{r}}^{x}, S_{\mathbf{r}}^{y}, S_{\mathbf{r}}^{z})$ represents the Pauli spin vector at lattice site $\boldsymbol{r}$.

We calculate the two-point correlation function $\langle \boldsymbol{S}_{\boldsymbol{i}} \cdot \boldsymbol{S}_{\boldsymbol{i} + \boldsymbol{r}} \rangle = \mathrm{Tr}(\rho_{\beta} \, \boldsymbol{S}_{\boldsymbol{i}} \cdot \boldsymbol{S}_{\boldsymbol{i} + \boldsymbol{r}})$ for the Gibbs state $\rho_{\beta} = e^{-\beta H_{\mathrm{Hei}}} / \mathrm{Tr}(e^{-\beta H_{\mathrm{Hei}}})$, between two sites $\boldsymbol{i}$ and $\boldsymbol{i} + \boldsymbol{r}$. The results are illustrated in Fig.~\ref{fig:2D_Heisenberg}.  

Two types of geometries are considered. The first is a rectangular lattice with dimensions $L_x = 16$ and $L_y = 5$. In Fig.~\ref{fig:2D_Heisenberg}(a), the correlation function is computed at $\beta = 0.1$ by fixing $\boldsymbol{i} = (3,3)$ and varying $\boldsymbol{r} = (r,0)$ for $r = 1, 2, \cdots, 10$. The second geometry is a square lattice with $L_x = L_y = 9$. In Fig.~\ref{fig:2D_Heisenberg}(b), we fix $\boldsymbol{i} = (2,5)$ and calculate the correlation function for $\boldsymbol{r} = (r,0)$ with $r = 1, 2, \cdots, 6$ under the same temperature, $\beta = 0.1$. In both setups, the correlation functions are computed within the bulk regions, focusing on the middle layer of the lattice along the $y$-direction, while excluding the edge sites.

By varying $\alpha$ between 0.9 and 1.8, we find that the absolute value $\abs{\langle \boldsymbol{S}_{\boldsymbol{i}} \cdot \boldsymbol{S}_{\boldsymbol{i} + \boldsymbol{r}} \rangle}$ of the correlation function is consistently bounded above by $C / |\boldsymbol{r}|^{\alpha}$ (where $C$ is a constant), depending on the separation distance $|\boldsymbol{r}|$. These results confirm that the clustering property robustly holds even for $\alpha < D$.

We employed the exponential tensor renormalization group (XTRG) to construct the matrix product operator (MPO) representation of the Gibbs state, which was subsequently utilized to compute the correlation functions. For 2D systems, the MPO was implemented using the standard snakelike mapping, which transforms the 2D lattice into an effective 1D structure by mapping $(x, y)$ to a 1D index defined as $x + L_{x} \times (y - 1)$. To improve computational efficiency, we incorporated SU(2) symmetry, setting the MPO bond dimension to 300 SU(2) multiplets, corresponding to approximately 1150 individual states.

\begin{figure}[h]
	\centering
	\includegraphics[width=0.7\textwidth]{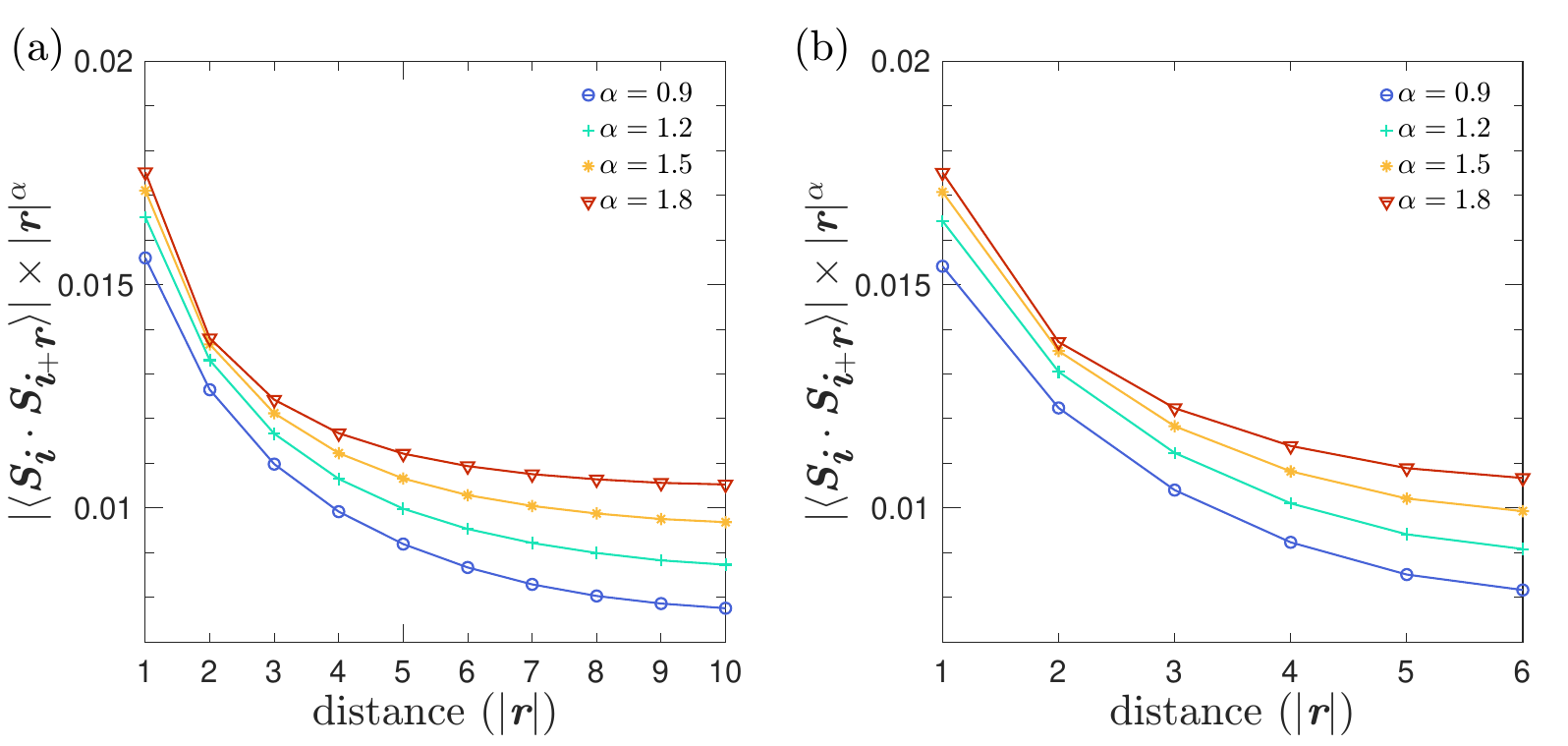}
	\caption{Scaled two-point correlation functions $\abs{\langle \boldsymbol{S}_{\boldsymbol{i}} \cdot \boldsymbol{S}_{\boldsymbol{i} + \boldsymbol{r}} \rangle} \times \abs{\boldsymbol{r}}^{\alpha}$ for the 2D long-range Heisenberg model. 
		(a) Results for a rectangular lattice ($L_x = 16, L_y = 5$) with $\boldsymbol{i} = (3,3)$, varying $\boldsymbol{r} = (r,0)$ for $r = 1, 2, \cdots, 10$ at $\beta = 0.1$.  
		(b) Results for a square lattice ($L_x = L_y = 9$) with $\boldsymbol{i} = (2,5)$, varying $\boldsymbol{r} = (r,0)$ for $r = 1, 2, \cdots, 6$ at $\beta = 0.1$.  
		The plots illustrate the dependence of the scaled correlation function on $|\boldsymbol{r}|$, confirming the clustering property for $\alpha < D$.}
	\label{fig:2D_Heisenberg}
\end{figure}

\subsection{2D Long-Range XX Model}
As another example, we analyze the following 2D long-range XX model:
\begin{align}  
	H_{\mathrm{XX}} = \sum_{\substack{\boldsymbol{r}_{1}, \boldsymbol{r}_{2} \\ \boldsymbol{r}_{1} \neq \boldsymbol{r}_{2}}}  
	\frac{1}{2 |\boldsymbol{r}_{1}-\boldsymbol{r}_{2}|^{\alpha}} \br{S_{\boldsymbol{r}_{1}}^{x} S_{\boldsymbol{r}_{2}}^{x} + S_{\boldsymbol{r}_{1}}^{y} S_{\boldsymbol{r}_{2}}^{y}}.
\end{align}  

We calculate the two-point correlation functions $\langle S_{\boldsymbol{i}}^{x} S_{\boldsymbol{i} + \boldsymbol{r}}^{x} \rangle = \mathrm{Tr}(\rho_{\beta} \, S_{\boldsymbol{i}}^{x} S_{\boldsymbol{i} + \boldsymbol{r}}^{x})$ and $\langle S_{\boldsymbol{i}}^{z} S_{\boldsymbol{i} + \boldsymbol{r}}^{z} \rangle = \mathrm{Tr}(\rho_{\beta} \, S_{\boldsymbol{i}}^{z} S_{\boldsymbol{i} + \boldsymbol{r}}^{z})$ for the Gibbs state $\rho_{\beta} = e^{-\beta H_{\mathrm{XX}}} / \mathrm{Tr}(e^{-\beta H_{\mathrm{XX}}})$, between two sites $\boldsymbol{i}$ and $\boldsymbol{i} + \boldsymbol{r}$. The results are illustrated in Fig.~\ref{fig:2D_XX}.

In this scenario, we considered a rectangular lattice with $L_x = 8$ and $L_y = 5$. The correlation functions between $\boldsymbol{i} = (2,3)$ and $\boldsymbol{i} + \boldsymbol{r}$ were computed at $\beta = 0.1$, with $\boldsymbol{r} = (r,0)$ varying for $r$ values ranging from 1 to 4. 
Fig.~\ref{fig:2D_XX}(a) shows $\langle S_{\boldsymbol{i}}^{x} S_{\boldsymbol{i} + \boldsymbol{r}}^{x} \rangle$, while Fig.~\ref{fig:2D_XX}(b) shows $\langle S_{\boldsymbol{i}}^{z} S_{\boldsymbol{i} + \boldsymbol{r}}^{z} \rangle$. In both cases, the results support the clustering property for $\alpha < D$.
As in the previous case, the XTRG algorithm was used to compute the Gibbs state's MPO, employing $\mathrm{U}(1)$ symmetry with a bond dimension of 600.

\begin{figure}[h]
	\centering
	\includegraphics[width=0.7\textwidth]{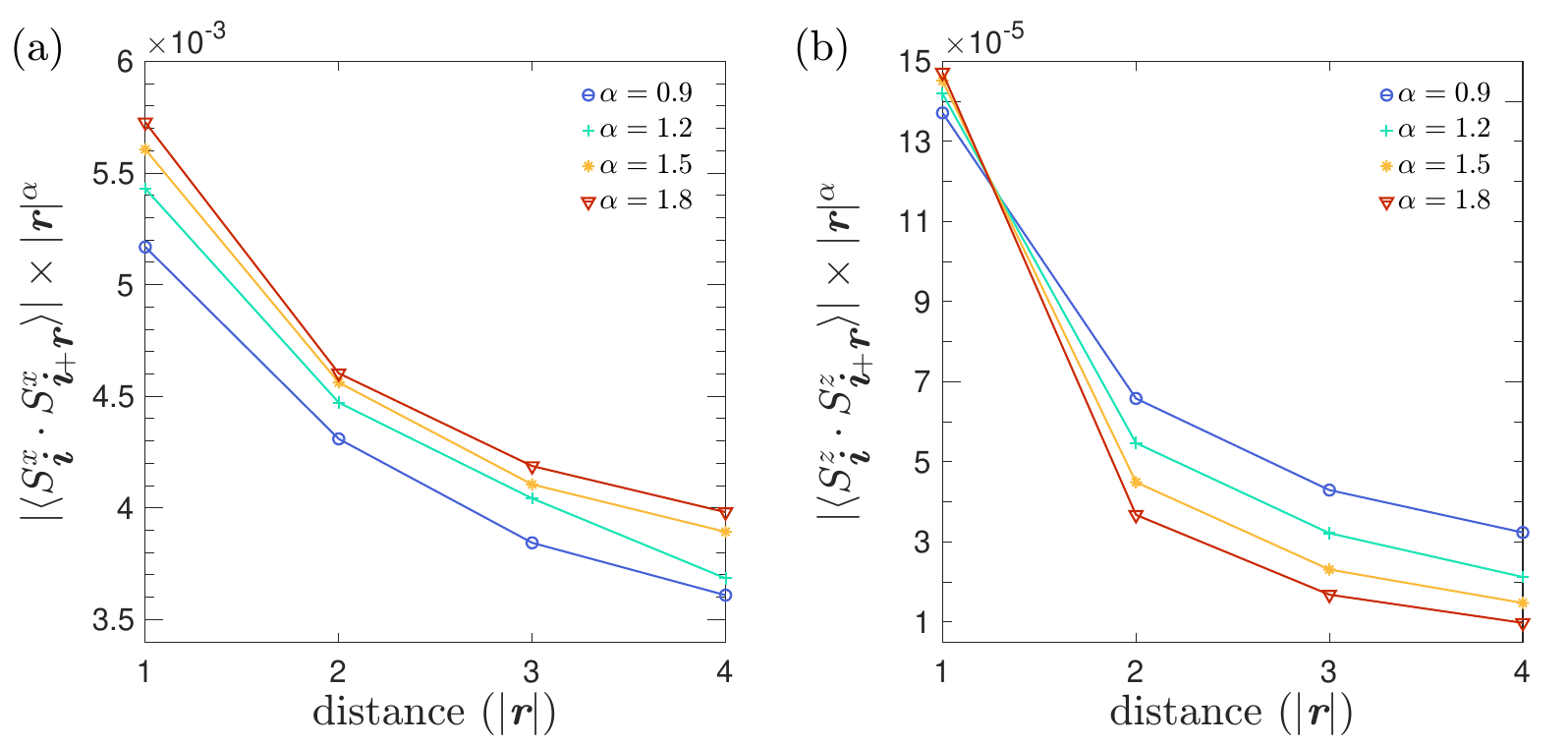}
	\caption{Correlation functions computed for the 2D long-range XX model on a rectangular lattice with $L_x = 8$ and $L_y = 5$, evaluated at $\beta = 0.1$. The calculations were performed for $\boldsymbol{i} = (2,3)$ and $\boldsymbol{r} = (r,0)$ with $r = 1,2,3,4$. 
		(a) The scaled correlation function $\abs{\langle S_{\boldsymbol{i}}^{x} S_{\boldsymbol{i} + \boldsymbol{r}}^{x} \rangle} \times \abs{\boldsymbol{r}}^{\alpha}$ is shown, demonstrating the decay of correlations consistent with the clustering property for $\alpha < D$. 
		(b) The scaled correlation function $\abs{\langle S_{\boldsymbol{i}}^{z} S_{\boldsymbol{i} + \boldsymbol{r}}^{z} \rangle} \times \abs{\boldsymbol{r}}^{\alpha}$ is presented, which similarly supports the clustering property under the same conditions.}
	\label{fig:2D_XX}
\end{figure}

\section{Numerical Detail of Shapourian-Shiozaki-Ryu Negativity}
We consider $n$ spinless fermionic particles with the basis states $|0 \rangle$ and $|1 \rangle = c^{\dagger}|0 \rangle$. The determinant of the $n \times n$ matrix $\mathbf{X} = (x_{i,j})$ is defined as
\begin{align}
	\det \mathbf{X} = \sum_{\sigma \in S_{n}} \mathrm{sgn}(\sigma) x_{1,\sigma(1)} \cdots x_{n,\sigma(n)},
\end{align}
and the Pfaffian of the $2n \times 2n$ skew-symmetric matrix $\mathbf{Y} = (y_{i,j})$, i.e. $\mathbf{Y}^{\mathrm{T}} = - \mathbf{Y}$, is defined as
\begin{align}
	\mathrm{Pf}[\mathbf{Y}] = \frac{1}{2^{n}n!} \sum_{\sigma \in S_{2n}} \mathrm{sgn}(\sigma) y_{\sigma(1) \sigma(2)} \cdots y_{\sigma(2n-1),\sigma(2n)}.
\end{align}
Here, $S_{n}$ and $\mathrm{sgn}(\sigma)$ denotes the symmetric group and the sign of the permutation $\sigma$, respectively.
For an arbitrary $2n \times 2n$ matrix $\mathbf{Z}$, we use the following expression
\begin{align}
	\mathbf{Z} = \begin{pmatrix} [\mathbf{Z}]^{(1,1)} & [\mathbf{Z}]^{(1,2)} \\ [\mathbf{Z}]^{(2,1)} & [\mathbf{Z}]^{(2,2)} \end{pmatrix}
\end{align}
to describe the $n \times n$ block matrix $[\mathbf{Z}]^{(a,b)}$ with $a$-th row and $b$-th column ($a,b \in \{1,2\}$). We denote 
$\mathds{1}_{n}$ as an $n \times n$ identity matrix. 

We use the Grassmann variables $\{\xi_{j},\overline{\xi}_{j}\}$ to describe the fermionic systems with the following notation:
\begin{align}
	\xi &\equiv (\xi_{1},\cdots,\xi_{n}), \\
	(\xi,\overline{\xi}) &\equiv (\xi_{1},\cdots,\xi_{n},\overline{\xi}_{1},\cdots,\overline{\xi}_{n}),
\end{align}
with the differentials
\begin{align}
	d \xi &\equiv d \xi_{n} \cdots d\xi_{1},  \\
	d\overline{\xi} d \xi &\equiv d \overline{\xi}_{n} \cdots d\overline{\xi}_{1} d \xi_{n} \cdots d\xi_{1}, \label{SSR,2} \\
	d (\overline{\xi}, \xi) &\equiv d \overline{\xi}_{1} d \xi_{1} \cdots d \overline{\xi}_{n} d \xi_{n} = (-1)^{n(n-1) / 2} d\overline{\xi} d\xi. \label{SSR,3}
\end{align}
We define the state with the Grassmann variable as
\begin{align}
	|\xi_{j} \rangle &\equiv |0 \rangle_{j} - \xi_{j} |1 \rangle_{j}, \label{SSR,4-1} \\
	|\overline{\xi}_{j} \rangle &\equiv \langle 0 |_{j} - \overline{\xi}_{j} \langle 1 |_{j}, \label{SSR,4-2} \\
	|\xi \rangle \langle \overline{\xi} | &= |\xi_{1} \rangle \langle \overline{\xi}_{1}| \otimes \cdots \otimes |\xi_{n} \rangle \langle \overline{\xi}_{n}|. \label{SSR,5}
\end{align}
The definition naturally leads to $\langle \overline{\xi}_{j} | \xi_{j} \rangle = e^{\overline{\xi}_{j} \xi_{j}}$. We also use the notation $(\chi,\overline{\chi})$ to describe another Grassmann variables.

In terms of the Grassmann variables, the trace of an operator $O$ is given by
\begin{align}
	\mathrm{Tr}[O] = \int \langle - \overline{\xi} | O | \xi \rangle e^{- \sum_{j} \overline{\xi}_{j} \xi_{j}} \, d(\overline{\xi},\xi). \label{SSR,9}
\end{align}
We consider the following form of the density operator:
\begin{align}
	\rho = \frac{1}{Z_{\rho}} \int d\overline{\xi} d \xi \, |\xi \rangle \langle \overline{\xi}| e^{(\xi,\overline{\xi}) \frac{1}{2} \boldsymbol{\Gamma} (\xi,\overline{\xi})^{\mathrm{T}} + \sum_{j} \overline{\xi}_{j} \xi_{j}}. \label{SSR,10}
\end{align}
Imposing $\mathrm{Tr}[\rho] = 1$ with Eqs.~\eqref{SSR,9} and~\eqref{SSR,10}, we have $Z_{\rho} = \mathrm{Pf}[\boldsymbol{\Gamma}]$. Therefore, the density matrix is written as
\begin{align}
	\rho = \frac{1}{\mathrm{Pf}[\boldsymbol{\Gamma}]} \int d\overline{\xi} d \xi \, |\xi \rangle \langle \overline{\xi}| e^{(\xi,\overline{\xi}) \frac{1}{2} \boldsymbol{\Gamma} (\xi,\overline{\xi})^{\mathrm{T}} + \sum_{j} \overline{\xi}_{j} \xi_{j}}. \label{SSR,10-2}
\end{align}
Using Eq.\eqref{SSR,9}, we can relate the two-point correlations functions of the fermionic creation and annihilation operators to the matrix $\boldsymbol{\Gamma}$ by
\begin{align}
	\boldsymbol{\Gamma}^{-1} = \begin{pmatrix} [\boldsymbol{\Gamma}^{-1}]^{(1,1)} & [\boldsymbol{\Gamma}^{-1}]^{(1,2)} \\ [\boldsymbol{\Gamma}^{-1}]^{(2,1)} & [\boldsymbol{\Gamma}^{-1}]^{(2,2)} \end{pmatrix} = \begin{pmatrix} \langle c_{j} c_{i} \rangle_{\rho} & - \langle c_{j}^{\dagger} c_{i} \rangle_{\rho} \\ \langle c_{i}^{\dagger} c_{j} \rangle_{\rho} & \langle c_{j}^{\dagger} c_{i}^{\dagger} \rangle_{\rho} \end{pmatrix}. \label{SSR,41}
\end{align}
In other words, we identify the density matrix with the covaraince matrix $\boldsymbol{\Gamma}$.

Now we move onto the Shapourian-Shiozaki-Ryu (SSR) negativity. We partition the total system $\Lambda$ into $A$ and $B$ with $|A| = n_{A}$ and $|B| = n_{B}$. To define the SSR negativity between $A$ and $B$, we need the partial time-reversal transformation $R_{A}$ on the subsystem $A$. The time-reversal transformation maps $|\xi \rangle \langle \overline{\xi}|$ to $|i \overline{\xi} \rangle \langle i \xi|$ in the coherent basis. Therefore, we have 
\begin{align}
	(|\{\xi_{j}\}_{j \in A}, \{\xi_{j}\}_{j \in B} \rangle \langle \{\overline{\chi}_{j}\}_{j \in A}, \{\overline{\chi}_{j}\}_{j \in B}|)^{R_{A}} = |\{i \overline{\chi}_{j}\}_{j \in A},\{\xi_{j}\}_{j \in B} \rangle \langle \{i \xi_{j}\}_{j \in A},\{\overline{\chi}_{j}\}_{j \in B}|,
\end{align}
and therefore the density matrix after the partial time-reversal transformation becomes
\begin{align}
	\rho^{R_{A}} = \frac{1}{\mathrm{Pf}[\boldsymbol{\Gamma}]} \int d\overline{\xi} \, d\xi \, |i \overline{\xi}_{A}, \xi_{B} \rangle \langle i \xi_{A}, \overline{\xi}_{B} | e^{(\xi,\overline{\xi}) \frac{1}{2} \boldsymbol{\Gamma} (\xi,\overline{\xi})^{\mathrm{T}} + \sum_{j} \overline{\xi}_{j} \xi_{j}}.
\end{align}
Now, we rearrange the Grassmann variables. Let the vector $(\xi',\overline{\xi}') \equiv (\xi_{A}',\xi_{B}',\overline{\xi}_{A}',\overline{\xi}_{B}') = (- i \overline{\xi}_{A}, \xi_{B},-i\xi_{A},\overline{\xi}_{B})$, then we represent it by introducing the matrix $\mathbf{T}$:
\begin{align}
	(\xi',\overline{\xi}')^{\mathrm{T}} = \mathbf{T} (\xi,\overline{\xi})^{\mathrm{T}}, \qquad \mathbf{T} = \begin{pmatrix} 0 & 0 & -i \mathds{1}_{n_{A}} & 0 \\ 0 & \mathds{1}_{n_{B}} & 0 & 0 \\ - i \mathds{1}_{n_{A}} & 0 & 0 & 0 \\ 0 & 0 & 0 & \mathds{1}_{n_{B}} \end{pmatrix}.  \label{SSR,44}
\end{align}
Here, $\mathds{1}_{n_{A}}$ ($\mathds{1}_{n_{B}}$) is an $n_{A} \times n_{A}$ ($n_{B} \times n_{B}$) identity matrix. 
With this matrix, a partial transformed density matrix is rewritten as follows:
\begin{align}
	\rho^{R_{A}} = \frac{1}{\mathrm{Pf}[\boldsymbol{\Gamma}]} \int d\overline{\xi} d \xi \, |\xi \rangle \langle \overline{\xi}| e^{(\xi,\overline{\xi}) \frac{1}{2} \mathbf{S}' (\xi,\overline{\xi})^{\mathrm{T}}}, \label{SSR,45}
\end{align}
with a newly defined matrix
\begin{align}
	\mathbf{S}' := \mathbf{T} \Bigg[\boldsymbol{\Gamma} + \begin{pmatrix} 0 & - \mathds{1}_{n} \\ \mathds{1}_{n} & 0 \end{pmatrix}\Bigg] \mathbf{T}. \label{SSR,46}
\end{align}
The next step is to consider the product $(\rho^{R_{A}})^{\dagger} \rho^{R_{A}}$. We introduce the matrix $\mathbf{S}''$ for the operator $(\rho^{R_{A}})^{\dagger}$:
\begin{align}
	\mathbf{S}'' = \begin{pmatrix} ([\mathbf{S}']^{(2,2)})^{\dagger} & ([\mathbf{S}']^{(1,2)})^{\dagger} \\ ([\mathbf{S}']^{(2,1)})^{\dagger} & ([\mathbf{S}']^{(1,1)})^{\dagger} \end{pmatrix}. \label{SSR,47}
\end{align}
This leads to
\begin{align}
	(\rho^{R_{A}})^{\dagger} &= \frac{1}{\mathrm{Pf}[\boldsymbol{\Gamma}]} \int d\overline{\xi} d\xi \, |\xi \rangle \langle \overline{\xi}| e^{(\xi,\overline{\xi}) \frac{1}{2} (\mathbf{S}'') (\xi,\overline{\xi})^{\mathrm{T}}}.
\end{align}
After a straightforward computation, we find the following expression for the product $(\rho^{R_{A}})^{\dagger} \rho^{R_{A}}$:
\begin{align}
	(\rho^{R_{A}})^{\dagger} \rho^{R_{A}} &= \frac{(-1)^{n^{2}} \mathrm{Pf}[\mathbf{B}]}{\mathrm{Pf}[\boldsymbol{\Gamma}]^{2}} \int d\overline{\xi} d\xi \, |\xi \rangle \langle \overline{\xi}| e^{(\xi,\overline{\xi}) \frac{1}{2} \boldsymbol{\Gamma}' (\xi,\overline{\xi})^{\mathrm{T}} + \sum_{j} \overline{\xi}_{j} \xi_{j}}, \label{SSR,49}
\end{align}
with introducing two matrices
\begin{align}
	\mathbf{B} &= \begin{pmatrix} [\mathbf{S}']^{(1,1)} & - \mathds{1}_{n} \\ \mathds{1}_{n} & [\mathbf{S}'']^{(2,2)} \end{pmatrix}, \label{SSR,52}  \\
	\boldsymbol{\Gamma'} &= - \begin{pmatrix}
		0 & [\mathbf{S}'']^{(1,2)} \\ [\mathbf{S}']^{(2,1)} & 0
	\end{pmatrix} \mathbf{B}^{-1} \begin{pmatrix}
		0 & [\mathbf{S}']^{(1,2)} \\ [\mathbf{S}'']^{(2,1)} & 0
	\end{pmatrix} + \begin{pmatrix}
		[\mathbf{S}'']^{(1,1)} & 0 \\ 0 & [\mathbf{S}']^{(2,2)}
	\end{pmatrix} - \begin{pmatrix}
		0 & - \mathds{1}_{n} \\ \mathds{1}_{n} & 0
	\end{pmatrix}. \label{SSR,51}
\end{align}
Then the SSR negativity is
\begin{align}
	E_{\mathrm{SSR}} = \log \Vert \rho^{R_{A}} \Vert_{1} = \log \mathrm{Tr}\Big[\sqrt{(\rho^{R_{A}})^{\dagger} \rho^{R_{A}}}\Big].
\end{align}

\subsection{Numerical Procedure}
Based on the above calculations, we present the following procedure to numerically calculate the SSR negativity.

\begin{itemize}
	\item[1.] We first compute the covariance matrix $\boldsymbol{\Gamma}$ from Eq.~\eqref{SSR,41}. Then we calculate 
	$\mathbf{S}',\mathbf{S}'',$ and $\mathbf{B}$ from Eqs.~\eqref{SSR,46},~\eqref{SSR,47}, and~\eqref{SSR,52} to construct $\boldsymbol{\Gamma}'$ by Eq.~\eqref{SSR,51}.
	\item[2.] Note the relation
	\begin{align}
		(\rho^{R_{A}})^{\dagger} \rho^{R_{A}} &= \frac{(-1)^{n^{2}} \mathrm{Pf}[\mathbf{B}] \mathrm{Pf}[\boldsymbol{\Gamma}']}{\mathrm{Pf}[\boldsymbol{\Gamma}]^{2}} \rho'', \\
		\rho'' &= \frac{1}{\mathrm{Pf}[\boldsymbol{\Gamma}']} \int d\overline{\xi} d\xi \, |\xi \rangle \langle \overline{\xi}| e^{(\xi,\overline{\xi}) \frac{1}{2} \boldsymbol{\Gamma}' (\xi,\overline{\xi})^{\mathrm{T}} + \sum_{j} \overline{\xi}_{j} \xi_{j}}.
	\end{align}
	From $\boldsymbol{\Gamma}'$, using Eq.~\eqref{SSR,41} again, we define the covariance matrix $\mathbf{C}$ of $\rho''$:
	\begin{align}
		\mathbf{C} = \begin{pmatrix} \langle c_{i}^{\dagger} c_{j} \rangle_{\rho''} & \langle c_{i}^{\dagger} c_{j}^{\dagger} \rangle_{\rho''} \\ \langle c_{i} c_{j} \rangle_{\rho''} & \langle c_{i} c_{j}^{\dagger} \rangle_{\rho''} \end{pmatrix} = \begin{pmatrix} [(\boldsymbol{\Gamma}')^{-1}]^{(2,1)} & \Big([(\boldsymbol{\Gamma}')^{-1}]^{(2,2)}\Big)^{\mathrm{T}} \\ \Big([(\boldsymbol{\Gamma}')^{-1}]^{(1,1)}\Big)^{\mathrm{T}} & \mathds{1}_{n} + [(\boldsymbol{\Gamma}')^{-1}]^{(1,2)} \end{pmatrix}.
	\end{align}
	\item[3.] We choose the unitary transformation $U$ to define another fermionic annihilation operators $\mathbf{d} = (d_{1},\cdots,d_{n})$:
	\begin{align}
		\begin{pmatrix} \mathbf{d}^{\dagger} \\ \mathbf{d} \end{pmatrix} = U \begin{pmatrix} \mathbf{c}^{\dagger} \\ \mathbf{c} \end{pmatrix},
	\end{align}
	so that it diagonalizes $\rho''$ as
	\begin{align}
		\rho'' = \frac{e^{-\sum_{i = 1}^{n} \epsilon_{i}'' d_{i}^{\dagger} d_{i}}}{Z''}, \label{Conditiond}
	\end{align}
	where $Z'' = \mathrm{Tr}[\rho''] = \prod_{i = 1}^{n} (1 + e^{-\epsilon_{i}''})$. Equation.~\eqref{Conditiond} yields
	\begin{align}
		\begin{pmatrix} \langle d_{i}^{\dagger} d_{j} \rangle_{\rho''} &  \langle d_{i}^{\dagger} d_{j}^{\dagger} \rangle_{\rho''} \\ \langle d_{i} d_{j} \rangle_{\rho''} & \langle d_{i} d_{j}^{\dagger} \rangle_{\rho''} \end{pmatrix} = \begin{pmatrix}
			\frac{1}{e^{\epsilon_{i}''} + 1} \delta_{i,j} & 0 \\ 0 & \frac{e^{\epsilon_{i}''}}{e^{\epsilon_{i}''} + 1} \delta_{i,j}
		\end{pmatrix} = U \begin{pmatrix} \langle c_{i}^{\dagger} c_{j} \rangle_{\rho''} &  \langle c_{i}^{\dagger} c_{j}^{\dagger} \rangle_{\rho''} \\ \langle c_{i} c_{j} \rangle_{\rho''} & \langle c_{i} c_{j}^{\dagger} \rangle_{\rho''} \end{pmatrix} U^{\dagger}.
	\end{align}
	It means that we can obtain $\epsilon_{i}''$ by diagonalizing the covariance matrix $\mathbf{C}$:
	\begin{align}
		\mathbf{C} &= U^{\dagger} \mathrm{diag}(\langle d_{i}^{\dagger}d _{i} \rangle_{\rho''}, \langle d_{i} d_{i}^{\dagger} \rangle_{\rho''}) U, \\
		\langle d_{i}^{\dagger} d_{i} \rangle_{\rho''} &= \frac{1}{e^{\epsilon_{i}''} + 1}.
	\end{align}
	\item[4.] From the following identity
	\begin{align}
		\mathrm{Tr}\Big[\sqrt{(\rho^{R_{A}})^{\dagger} \rho^{R_{A}}}\Big] &= \sqrt{\frac{(-1)^{n^{2}} \mathrm{Pf}[\mathbf{B}] \mathrm{Pf}[\boldsymbol{\Gamma}']}{\mathrm{Pf}[\boldsymbol{\Gamma}]^{2}}} \mathrm{Tr}\sqrt{\rho''} = \sqrt{\frac{\mathrm{Pf}[\mathbf{B}] \mathrm{Pf}[\boldsymbol{\Gamma}']}{\mathrm{Pf}[\boldsymbol{\Gamma}]^{2}}} \prod_{i = 1}^{n} \frac{1 + e^{-\frac{\epsilon_{i}''}{2}}}{\sqrt{1 + e^{-\epsilon_{i}''}}},
	\end{align}
	we get the following expression of the SSR negativity with computable quantities:
	\begin{align}
		E_{\mathrm{SSR}} &= \log \mathrm{Tr}\Big[\sqrt{(\rho^{R_{A}})^{\dagger} \rho^{R_{A}}}\Big] \\
		&= \frac{1}{2} \log \Bigg[\frac{(-1)^{n^{2}} \mathrm{Pf}[\mathbf{B}] \mathrm{Pf}[\boldsymbol{\Gamma}']}{\mathrm{Pf}[\boldsymbol{\Gamma}]^{2}}\Bigg] + \sum_{i=1}^{n} \Bigg[\log \Big(1 + e^{- \frac{\epsilon_{i}''}{2}}\Big) - \frac{1}{2} \log (1 + e^{-\epsilon_{i}''})\Bigg] \label{SSR,60} \\
		&= \frac{1}{2} \log \Bigg[\frac{(-1)^{n^{2}} \mathrm{Pf}[\mathbf{B}] \mathrm{Pf}[\boldsymbol{\Gamma}']}{\mathrm{Pf}[\boldsymbol{\Gamma}]^{2}}\Bigg] + \sum_{i=1}^{n} \log \Big(\sqrt{\langle d_{i}^{\dagger} d_{i} \rangle_{\rho''}} + \sqrt{\langle d_{i} d_{i}^{\dagger} \rangle_{\rho''}}\Big). \label{SSR,61}
	\end{align}
\end{itemize}

%
%

\end{widetext}

\end{document}